\spnewtheorem*{theorem_app}{Theorem}{\bfseries}{\itshape}
\begin{document}
\title{Online Exploration of Polygons with Holes}

\author{Robert Georges
        \and
        Frank Hoffmann
        \and
        Klaus Kriegel
        } 
        
\institute{Freie Universit\"at Berlin, Institut f\"ur Informatik, 14195 Berlin, Germany\\
\email{\{georges, hoffmann, kriegel\}@mi.fu-berlin.de}}

\maketitle

\begin{abstract} 
We study  online strategies for autonomous  mobile robots with vision to explore unknown polygons with at most $h$ holes. Our main contribution is an $(h+c_0)!$--competitive strategy for such polygons under the assumption that each hole is marked with a special color, where $c_0$ is a universal constant. The strategy is based on a new hybrid approach. Furthermore, we give a new lower bound construction for small $h$.\\
\\
{\bf Keywords:} Polygons with holes, online exploration, competitive analysis
\end{abstract} 

\section{Introduction}

A classical basic task \cite{B,PY} for an autonomous mobile robot is to explore an unknown environment modeled by a polygon, possibly with polygonal holes. We assume the robot to be point shaped and to start from a given point, $s$, on the polygon's outer  boundary. It is equipped with an unlimited $360^{\circ}$ vision system that continuously provides the visibility polygon of its current position.
When the robot has observed every point of the polygon it returns to $s$.

Considering a known polygon, an optimal tour $\mathcal{T}_{\mathrm{opt}}$ through $s$ can be computed offline.
The robot's performance exploring the unknown polygon online is evaluated through competitive analysis. Therefore we compare the length of the tour generated by the robot with the length of $\mathcal{T}_{\mathrm{opt}}$.
If this ratio is bounded from above by a constant $\mathcal{C}$ for any problem instance, we call the strategy $\mathcal{C}$--competitive.

Over the last two decades, the problem of designing competitive online exploration strategies for certain polygon classes has received a lot of attention. A simple greedy strategy is almost optimal for simple orthogonal polygons as shown in  a seminal paper by Deng et al. \cite{DKP}, see also \cite{HNP}. Later, Hoffmann et al. \cite{HIKK} came up with a 26.5--competitive strategy for general
simple polygons (in the following called HIKK--strategy). On the other hand, there is a lower bound for the competitive ratio of 1.28 in this case \cite{HIL}. If one allows polygons with $h$ holes there is a
lower bound of $\Omega (\sqrt{h})$, even for orthogonal polygons \cite{AKS}, and computing the optimal offline tour becomes NP--hard.

 The only positive result in the presence of holes we are aware of is an $O(h)$--competitive strategy  for orthogonal polygons with $h$ holes \cite{DKP}. This result yields a 14--competitive strategy ($L_1$--metric) for the case of  one hole \cite{G}.

Surprisingly, there are no competitive strategies known for general polygons with at most $h$ holes, even in the case $h=1$. Such strategies were conjectured to exist for each $h$ in \cite{DKP-FOCS}. One of the main differences between exploring orthogonal and
general polygons is the following. An optimal tour that learns  a single hole in an orthogonal polygon can always afford to encircle the hole. In contrast, in a general polygon the hole could have the shape of a thin long triangle and the path length needed to learn it is not necessarily related to its perimeter. Such a hole could be learned from a distance with minimal effort.

We make the following contributions to the problem of exploring polygons with  holes. In Section 2 we give for $h=1$ a rather simple lower bound of 2 for the orthogonal case and a lower bound of 2.618 for the competitive ratio in the general case. This latter bound also holds for a modified model, where the hole is specially colored and the robot can therefore distinguish between outer boundary edges and edges of the hole. Undoubtedly, this should be of great advantage
for the robot to fulfill its task. Nevertheless, it seems to be  nontrivial to come up with a competitive strategy  under this assumption. Subsequently we describe our strategy $h$--CPEX, which stands for {\it Colored Polygon EXploration}, and prove it to have a competitive factor that depends on $h$ only.

We start with describing  strategy $1$--CPEX in Section 3. It proceeds in two phases. In Phase 1 it follows the HIKK--strategy until the hole $H$ is eventually visible for the first time. Then it learns, based on a doubling strategy, the shortest tour $R$ encircling $H \cup \left\{s\right\}$.  In Phase 2 a novel hybrid approach is implemented to explore the remaining "caves" inside and outside of $R$. It is based on the knowledge of the length $\left|R\right|$. As soon as our strategy knows that $\left|R\right|$ is less than $c \cdot \left|\mathcal{T}_{\mathrm{opt}}\right|$, for a suitable universal constant $c$,  the hole is classified {\it safe}, meaning that we can encircle it without loosing competitiveness. To this end  we connect the hole with $s$ by introducing a barrier and invoke  the HIKK--strategy (Lemma \ref{hybrid}) for the modified polygon. Otherwise, the hole $H$ has the status {\it critical}. In this case, we subdivide the polygon by building a  fence line $f$ that connects the farthest (wrt. $s$) point of $H$ with the outer boundary. We get two simple polygons, the {\it front yard} {$\mathcal F$} containing $s$ and the {\it backyard} {$\mathcal B$}. Again, the front yard is explored using HIKK but as soon as the path exceeds a certain length bound, we interrupt since $H$ becomes safe and we proceed as before. Otherwise, we are left with the task of exploring the backyard. This is done by doubling arguments.

In Section 4 we generalize these ideas in a straightforward way to an arbitrary number $h$ of colored holes. Again, $h$--CPEX first uses HIKK until the first hole is found and classifies discovered holes $H$ to be safe or critical afterwards. As before the decisions are based on the knowledge of the length of the shortest tour encircling $H \cup \left\{s\right\}$.
In the case of a safe hole we can invoke a recursive call of $(h-1)$--CPEX. Even the strategy for critical holes can be adopted. A main difference is the use of a generalized doubling strategy, which is known as $m$--star search \cite{PY,K}.

We show that, due to its recursive structure, the competitive ratio  $\mathcal C_h$ of $h$--CPEX is bounded by $(h+c_0)!$ for a universal constant $c_0$. We have not tried to optimize it, our main goal is to show that it is bounded in $h$.
A closer look at the case $h=1$ yields a competitive factor of $\approx 610$, see \cite{G}.\\
We assume that the reader is familiar with  the HIKK-strategy \cite{HIKK}. It serves as the base case $0$--CPEX for the recursive part of the $h$--CPEX. Recall that $\mathcal C_0=26.5$. 


\section{Lower Bounds}

\begin{theorem}
\label{bounds}
Any deterministic online strategy $\mathcal{S}_1$ that computes valid watchman routes in polygons with at most one hole has a competitive ratio (1)  $\geq 2$ in the orthogonal case and (2) $\geq \frac{3+\sqrt{5}}{2}\approx 2.618$ in the general case.
\end{theorem}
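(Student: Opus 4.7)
The plan is to use the standard adversarial framework in both parts: exhibit a family of polygons together with a rule by which the adversary reveals features of the polygon in response to the robot's moves, such that no deterministic strategy can avoid paying the claimed multiplicative penalty against the offline optimum. In each case the single hole is exploited to occlude a region that the robot must eventually see, forcing it to commit to one side of the hole before it has enough information to make the right choice.

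For part (1), I would take an axis-parallel outer rectangle with a non-rectangular orthogonal hole possessing a small concavity (for example a U-shaped or T-shaped hole). The geometry is arranged so that from $s$ the robot sees the two ``sides'' of the hole but cannot tell on which side the concavity opens. Whichever side the robot inspects first, the adversary declares the concavity to lie on the other side; the robot is then forced to travel around the hole to see it. Choosing both sides of the hole to have nearly equal length $L$ with the concavity arbitrarily small, the robot's route approaches $2L$ while the offline optimum approaches $L$, yielding a competitive ratio tending to $2$.

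For part (2), I would replace the axis-parallel features by tilted ones: the hole becomes a thin triangular obstacle with one vertex pointing toward $s$, placed so that sightlines from $s$ divide the polygon into two caves whose far portions are occluded. The slant of the triangle's sides is a free parameter $x$. The adversary again hides the critical structure on whichever side the robot neglects; but because sightlines are no longer axis-aligned, the robot can gain partial information by moving a carefully chosen distance before committing, and the trade-off between committing early and continuing to explore leads, after normalization, to the equation $x^{2}=x+1$. Its positive root is the golden ratio $\varphi=(1+\sqrt{5})/2$, and the worst-case competitive ratio then evaluates to $1+\varphi=(3+\sqrt{5})/2$.

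The main technical obstacle in both parts is verifying indistinguishability: at every point along the robot's route during its deliberation phase, the visibility polygon must coincide in the two candidate instances, so that the adversary's later revelation is consistent with everything the robot has already observed. In case (1) this is a routine combinatorial check, due to the axis-parallel structure. In case (2) the precise incidence of sightlines through the apex of the triangular hole is what pins down the parameter $x$, and it is this geometric incidence condition — not a free optimization — that produces the golden-ratio answer.
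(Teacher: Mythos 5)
Your part (1) follows the same adversarial idea as the paper, but the construction is under-specified in a way that matters: in an open axis-parallel rectangle with a single hole, the robot sees far too much from $s$ and from any hedging excursion, so the required indistinguishability is \emph{not} ``a routine combinatorial check.'' The paper's construction makes the entire polygon a system of long, thin, \emph{winding} corridors precisely so that the robot cannot look ahead, and places the decisive branching (two back-pointing corridors $R_1,R_2$, one of which the adversary truncates into a short dead end) at distance $d$ from $s$ along the loop around the hole; the robot then pays the loop plus an extra $2d$ to revisit the missed dead end, against an optimum of about $2d$. Your accounting ($2L$ versus $L$) is asserted rather than derived, and it is not clear in your geometry why the optimum gets away with $L$ while every online strategy is forced to $2L$.

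The genuine gap is in part (2). A one-level construction of the kind you describe --- a single hidden feature learnable cheaply from the left at unknown depth, or for certain from the right at distance $1$ --- yields only a lower bound of $2$: if the strategy probes the left to depth $a$ and then commits to the right, it pays $2a+2$ in both adversary cases, the optimum pays $\min(2a,2)$, and balancing $1+\tfrac{1}{a}=1+a$ gives $a=1$ and ratio $2$. To reach $\frac{3+\sqrt{5}}{2}$ the paper needs a second level: a reflex vertex $r$ hidden \emph{behind} $l$, whose existence is revealed only once the robot finally learns $l$'s cut from the right at $s'$, and which can be learned competitively only from the left. In the bad case this forces a second trip back to the left, costing $4\alpha+2$ against an optimum of $2\alpha$, while the other case costs $2\alpha+2$ against $2$; minimizing $\max\bigl(\frac{4\alpha+2}{2\alpha},\,\alpha+1\bigr)$ over the strategy's probing depth $\alpha$ gives $\alpha^2=\alpha+1$ and the ratio $1+\varphi$. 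Your proposal asserts the equation $x^2=x+1$ but supplies no mechanism producing the extra return trip, and your claim that the golden ratio arises from a ``geometric incidence condition --- not a free optimization'' inverts the actual source: it is precisely the minimax balancing of the two ratio functions over the strategy's free parameter $\alpha$.
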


\begin{proof}
The proof for the first part of Theorem \ref{bounds} can be found in Appendix \ref{app_orth_bound}.
For the proof of the second part consider the polygon given in Fig.\ref{lower_bound_2}(a). If the robot discovers a new reflex vertex $v$ that has an invisible incident edge, we call the extension of that edge into the polygon's interior the \textit{cut} of $v$. The cut of a reflex vertex on the outer boundary could either hit the hole or pass by on the left/right side. This leads to the following lower bound construction.

\begin{figure}[ht]
\centering
\includegraphics[scale=.59]{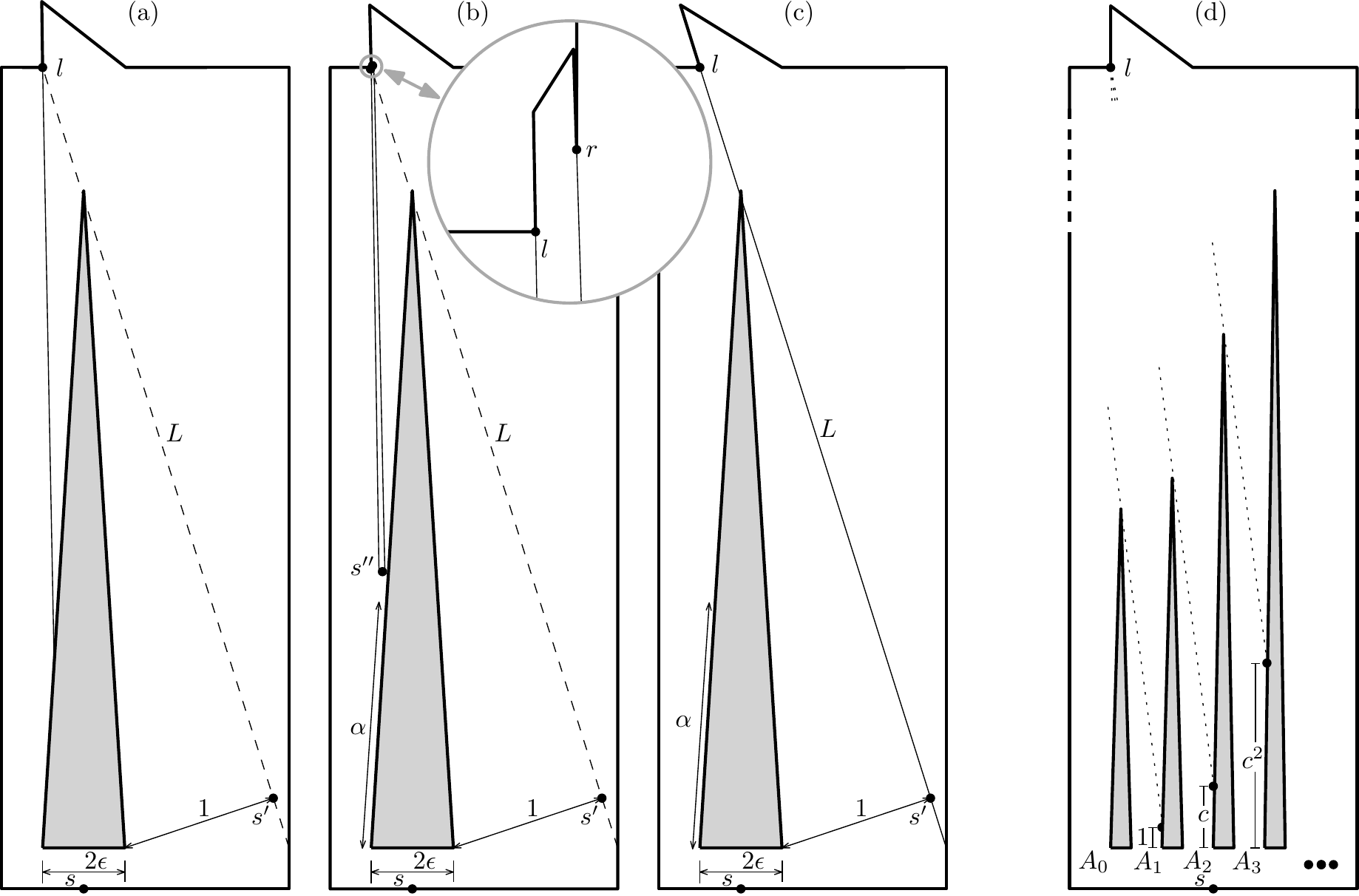}
\caption{Lower bound example, general case and extension to $h$ holes}
\label{lower_bound_2}
\end{figure}

 After traveling a distance of $3\epsilon$ an online strategy $\mathcal{S}_1$ has learned the hole completely and has discovered reflex vertex $l$.  
 Now, $\mathcal{S}_1$ knows line $L$ and point $s'$ on $L$, which is the closest point (say in distance 1) on the right side of the hole, where $\mathcal{S}_1$ could possibly learn the hidden edge behind vertex $l$. In fact, after reaching $s'$ the strategy perhaps sees everything and returns to $s$, Fig.\ref{lower_bound_2}(c).  
But directly moving to $s'$ is a fatal decision given the slightly modified situation in Fig.\ref{lower_bound_2}(b). Here it suffices to move distance $\alpha$ on the left side of the hole to point $s''$. There $\mathcal{S}_1$ learns both vertex $l$ and some reflex vertex $r$ that is hidden behind $l$. (Hint: $r$ is very close to $l$ and can be learned only from the left side in a competitive way.)
 
Any competitive strategy $\mathcal{S}_1$ must be able to handle both possibilities, the optimal strategy chooses the correct side. Therefore, $\mathcal{S}_1$ must try to explore  corner $l$ on the left side first, traveling some distance $\alpha$. Because of the malicious adversary, it still  misses the cut of $l$ by a very small distance. Then it will return and explore the cut of $l$ from the right side. 
If the task is completed in $s'$, the strategy travels a total distance of $2\alpha+2$.
But close to $s'$, it could also learn the existence of vertex $r$ and $\mathcal{S}_1$ has to return once again to the left side of the hole. This yields a total path length of $4\alpha +2$.

In both cases the quotient of the  tour length generated by $\mathcal{S}_1$ and the optimal tour length is a function in $\alpha$. 
The monotonically decreasing function $f(\alpha)=\frac{4\alpha +2}{2\alpha}$ describes the competitive ratio, if the cut points to the left side of the hole, Fig.\ref{lower_bound_2}(b). If the cut of $l$ is learned in $s'$ we have the monotonically increasing function $g(\alpha)=\frac{2\alpha+2}{2}$, Fig.\ref{lower_bound_2}(c).
Comparing both functions to determine the optimal value for $\alpha$ results in $\alpha=\frac{1+\sqrt{5}}{2}$, the golden ratio. We obtain $\frac{3+\sqrt{5}}{2}\approx 2.618$ as lower bound for the competitive ratio.
\qed
\end{proof}

In this context a colored hole would make no difference. The lower bound of $\Omega(\sqrt{h})$ for polygons with $h$ holes \cite{AKS} holds for the colored case, too. The problem of learning a reflex vertex in presence of vision-blocking holes turns out to be a fundamental issue for any strategy that wants to explore arbitrary polygons. 

\bigskip
\noindent
\textbf{Remark:} 
The lower bound construction in the general case can be extended to $h>1$ holes (indicated in Fig.\ref{lower_bound_2}(d)). For $h=2$ we get the lower bound of $\approx 2.9$, $h=3$ results in $\approx 3.02$. This does not lead to new results for $h\geq4$, because for increasing $h$ the obtained lower bound is decreasing again.


\section{1--CPEX: Polygons with One Colored Hole}
As usual, we assume that the starting point $s$ is on the outer boundary of $\mathcal P$ and
${\mathcal T}_{\mathrm{opt}}$ denotes a shortest closed watchman tour. 
The design of our strategy follows the basic principle that in each phase and sub phase
the generated  path should have a length that compares to  $\left|{\mathcal T}_{\mathrm{opt}}\right|$ in a competitive way. 

\subsection{How to Explore a Bicolored Corridor}
A basic task during the exploration is to learn the structure of the hole. Because of the coloring, each edge of the polygon can be easily associated with the hole or the outer boundary. This motivates the problem of exploring a bicolored corridor, which can be seen as a natural extension of the Cow-Path problem, see \cite{CKM,PY}.

The corridor may have several branchings. The task is to find a target $t$, that sees walls of both colors (Fig.\ref{hull}(a)). This additional constraint guarantees, that $t$ cannot be located in an unicolored part of the polygon. At any time, the visibility polygon contains only two edges connecting two walls of different colors. These two so-called \textit{main windows} arise from two vision blocking vertices, which can be explored on a semicircle, see \cite{HIKK}. The doubling approach \cite{CKM,K} is used to link both exploration directions.

\begin{figure}[ht]
\centering
\includegraphics[scale=1]{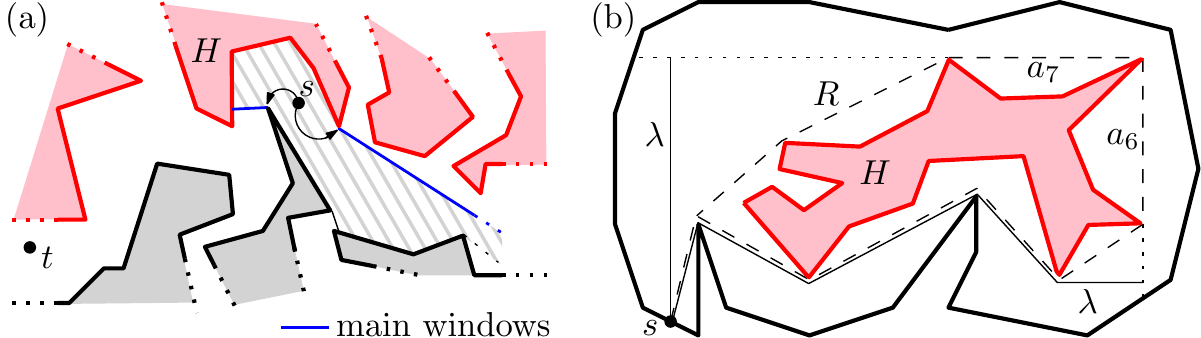}
\caption{(a) The bicolored corridor problem (b) Learning the shortest path $R$ encircling the hole}
\label{hull}
\end{figure}

\subsection{Phase 1: Learning the Shortest Tour Encircling the Hole}
\label{RCH}
If no point of the hole is visible from $s$, we start the $26.5$-competitive HIKK-strategy ($0$--CPEX) until $H$ becomes visible.  The next goal is  to look once around the hole, more precisely to learn the shortest tour $R$ around it. 
$R$ equals the boundary of the relative convex hull of $H \cup \{s\}$.\footnote{A set $M$ is relatively convex in   $\mathcal P$ if for each pair of points in $M$ the geodesics (shortest paths) connecting them are included in $M$.} Thus, one can imagine $R= \partial\left(\mathrm{RCH}\left(H \cup \{s\}\right)\right)$ as the shape of a rubber band spanned around $H$ and the starting point inside  $\mathcal P$, see Fig.\ref{hull}(b).

Any strategy that tries to learn $R$ circling $H$ in a fixed orientation will fail to be competitive. Consider e.g. the situation in Fig.\ref{lower_bound_2}(a). A strategy that explores $R$ in cw--orientation has to walk up to the top vertex of $H$ on the left side and down again on the right side of $H$. This can exceed $c\cdot \left|\mathcal{T}_{\mathrm{opt}}\right|$ for any constant $c$.

 Thus, the situation resembles the bicolored corridor problem. We explore $R$ in rounds via doubling, approaching the vertices corresponding to the main windows alternately on semicircles:
In an odd/even round $k$ we move in cw--/ccw--orientation $2^{k-1}$ length units. In each round there is a last known segment of $R$ corresponding to a part of the bicolored corridor. It is ending at a reflex vertex that is associated to the main window and hides the next segment.

Combining this with the factor $2$ of the semicircle strategy and the factor $9$ of the
doubling approach we can show that our strategy learns $R$ with total path length
$\leq 36 \left|{\mathcal T}_{\mathrm{opt}}\right|$.

\bigskip

\noindent
After learning $R$ we can derive the following lower bound $\lambda$ on $\left|\mathcal T_{\mathrm{opt}}\right|$.

Let $a_1,a_2, \ldots , a_n$ be the ccw-oriented chain of line segments defining $R$, starting from $s$. Any strategy that learns $R$ has to see each vertex $p_i \in R$, incident with $a_i$ and $a_{i+1}$, both from the right half-plane of $a_i$ and from the right half-plane of $a_{i+1}$. The path length to fulfill this task for $p_i$, maximized over all vertices of $R$, defines a lower bound $\lambda$ to learn $R$ and therefore a lower bound on $\left|\mathcal T_{\mathrm{opt}}\right|.$\footnote{This definition of $\lambda$ is equivalent with that given in \cite{GHK} for $h=1$ and it easily generalizes to the case $h>1$.} 

In Fig.\ref{hull}(b) the lower bound $\lambda$ is realized by the effort to learn $(a_6, a_7)$.

\subsection{ Phase 2: The Hybrid Approach}
\label{sec_hybrid}

The hole $H$ in $\mathcal P$ is called \textit{$c$--safe} (for a fixed constant $c$), if
$\left|R\right| \leq c  \left|{\mathcal T}_{\mathrm{opt}}\right|$
holds. As long as we don't know whether the hole is $c$--safe, the hole is called \textit{$c$--critical}. 
\\
\\
\textbf{Observation:} $|R| \leq c \lambda$ implies a $c$--safe hole. 
\\
\\
The hybrid approach consists in implementing the following rule: As soon 
as we learn that the hole is $c$--safe, our strategy will switch to the simple polygon mode using the following lemma.

\begin{lemma}\label{hybrid}
Any polygon  $\mathcal P$ with a $c$--safe hole $H$ can be explored 
with total path length $\leq (4c+2)\cdot{\mathcal C_0} \cdot \left|{\mathcal T}_{\mathrm{opt}}\right|$.
\end{lemma}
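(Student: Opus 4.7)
The natural approach is to reduce exploration of $\mathcal{P}$ to that of a simple polygon so that $0$--CPEX (the HIKK--strategy) can be invoked. First, I construct a \emph{barrier} $B$: the shortest path in $\mathcal{P}$ from $s$ to $\partial H$. Since $R$ is a closed curve through $s$ that touches $\partial H$, it decomposes into two arcs from $s$ to $\partial H$ together with a portion of $\partial H$, so $|B|\le|R|/2$. Cutting $\mathcal{P}$ along $B$ produces a simple polygon $\mathcal{P}'$ in which $B$ appears twice in the boundary, once per side of the slit. The robot materializes $B$ on the fly: it walks the shortest $s\to\partial H$ path and thereafter treats it as a virtual wall.

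Next, I invoke $0$--CPEX on $\mathcal{P}'$ starting at $s$. Since $\mathcal{P}'$ is simply connected, $0$--CPEX produces a watchman route $T$ of $\mathcal{P}'$ with $|T|\le\mathcal{C}_0\cdot|\mathcal{T}_{\mathrm{opt}}(\mathcal{P}')|$. Because $\mathcal{P}'$ covers the same point set as $\mathcal{P}$ (the slit $B$ has measure zero), $T$ also explores $\mathcal{P}$. The cost of walking out and back along $B$ to set up the slit, at most $2|B|\le|R|$, is absorbed into the final estimate.

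The heart of the proof is bounding $|\mathcal{T}_{\mathrm{opt}}(\mathcal{P}')|$ in terms of $|\mathcal{T}_{\mathrm{opt}}(\mathcal{P})|$ and $|R|$. I convert $\mathcal{T}_{\mathrm{opt}}(\mathcal{P})$ into a valid watchman route $T'$ of $\mathcal{P}'$ as follows. Every watchman tour of $\mathcal{P}$ encircles $H$, so $\mathcal{T}_{\mathrm{opt}}(\mathcal{P})$ crosses $B$ an odd number of times; at each crossing $p$ the tour becomes discontinuous in $\mathcal{P}'$ and is reconnected by detouring along $B$ back to $s$ and out to $p$ on the other side, at cost $\le 2|B|$ per crossing. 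In addition, I prepend a complete traversal of both sides of $B$ so that every point on the slit is seen from both sides rather than only from wherever the original tour happened to pass. These two modifications together yield an estimate of the form $|\mathcal{T}_{\mathrm{opt}}(\mathcal{P}')|\le 2|\mathcal{T}_{\mathrm{opt}}(\mathcal{P})|+c'|R|$ for an absolute constant $c'$; combining with $|R|\le c\,|\mathcal{T}_{\mathrm{opt}}|$ and the HIKK factor $\mathcal{C}_0$ then delivers the claimed bound $(4c+2)\mathcal{C}_0\cdot|\mathcal{T}_{\mathrm{opt}}|$.

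The main obstacle is the visibility analysis in the third paragraph: because $B$ blocks lines of sight in $\mathcal{P}'$ that are available in $\mathcal{P}$, one must argue carefully that $T'$ truly covers every point of $\mathcal{P}'$ despite this loss, rather than only the points visible under the unblocked topology of $\mathcal{P}$. The multiplicative factor $2$ on $|\mathcal{T}_{\mathrm{opt}}(\mathcal{P})|$ is the price of having to re-cover, via doubled sub-tours, the visibility that the slit has taken away; keeping $c'$ small enough that the final ratio matches $(4c+2)\mathcal{C}_0$ requires a precise count of how often $\mathcal{T}_{\mathrm{opt}}(\mathcal{P})$ can be forced to intersect $B$ and how much extra path length each such crossing contributes.
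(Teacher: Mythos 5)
Your overall architecture is the paper's: take the shortest path $b$ from $s$ to $H$ as a barrier (with $|b|\le |R|/2$), slit $\mathcal P$ along it to obtain a simple polygon $\mathcal P'$, run HIKK there, and bound the optimal watchman tour of $\mathcal P'$ by augmenting $\mathcal T_{\mathrm{opt}}$ with $R$ and the barrier. The gap is in your reconnection step. You repair each crossing of $b$ by $\mathcal T_{\mathrm{opt}}$ with a separate detour of cost up to $2|b|$, and then assert a total of $2|\mathcal T_{\mathrm{opt}}|+c'|R|$ for an \emph{absolute} constant $c'$. That does not follow: the number of times a shortest watchman tour crosses the geodesic $b$ is not bounded by any universal constant, so your per-crossing accounting yields $2k|b|$ with $k$ uncontrolled, and the "precise count" you defer to at the end does not exist in general. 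The paper avoids this entirely with a global Eulerian argument: form the multigraph consisting of all pieces of $\mathcal T_{\mathrm{opt}}$ (whose cut endpoints lie on the barrier), the tour $R$, and two copies of $b$ (one on each side of the slit); doubling every edge makes this connected graph Eulerian, and an Euler tour has length exactly $2\left(|R|+2|b|+|\mathcal T_{\mathrm{opt}}|\right)\le 4|R|+2|\mathcal T_{\mathrm{opt}}|\le (4c+2)|\mathcal T_{\mathrm{opt}}|$, independently of how many pieces the barrier creates. You should replace your crossing-by-crossing surgery with this doubling construction; note also that it delivers the constant $4c+2$ exactly, whereas your factor $2$ on $|\mathcal T_{\mathrm{opt}}|$ plus an unquantified $c'$ would not.

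A secondary point: you correctly identify that the slit may block visibility, but leave it unresolved. In the paper this is exactly what the inclusion of $R$ and of \emph{both} copies of $b$ in the Eulerian structure is for ("restore full vision"); any point whose only sightline from $\mathcal T_{\mathrm{opt}}$ is cut by the slit is recovered from the tour around the hole or from one of the two sides of the barrier. So the visibility repair and the connectivity repair are handled by the same added edges, and both are paid for once, not once per crossing.
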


\begin{proof}
Consider a shortest path $b$ from $s$ to $H$. If it contains reflex vertices
of  $\mathcal P$, we slightly shift $b$ into the polygon's interior. This way we treat $b$ as an additional barrier that transforms  $\mathcal P$ into a simple polygon  $\mathcal P'$ (Fig.\ref{phase2}(a)). We show that $0$--CPEX (i.e., HIKK) applied
to $\mathcal P'$ fulfills the required condition. Therefore it is sufficient to show
that there is a closed watchman tour $\mathcal T_\mathrm{e}$ of length  $\left|\mathcal T_\mathrm{e}\right| \leq (4c+2) \cdot \left|{\mathcal T}_{\mathrm{opt}}\right|$ in $\mathcal P'$. 

\begin{figure}[ht]
\centering
\includegraphics[scale=1]{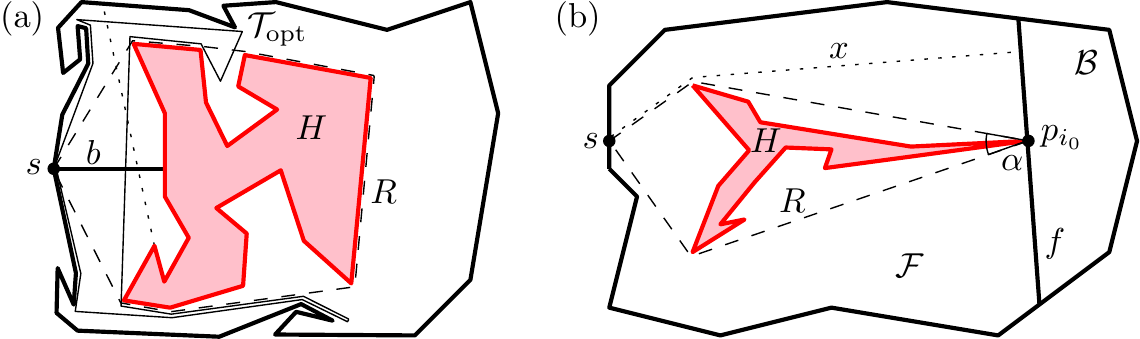}
\caption{Exploring a polygon with (a) a $c$--safe and (b) a $c$--critical hole.}
\label{phase2}
\end{figure}

Barrier $b$ can cut the original ${\mathcal T}_{\mathrm{opt}}$ into several (left and right) pieces. The visibility can be restricted, too. We use $R$ plus two copies of $b$ (one on the left, the other one on the right side of the barrier) to link together all pieces of ${\mathcal T}_{\mathrm{opt}}$ and to restore full vision. Doubling this structure we get an Eulerian graph and an Eulerian tour $\mathcal T_\mathrm{e}$. Finally:
\begin{equation*}
\left|\mathcal T_\mathrm{e}\right| = 2\left(|R|+2|b|+\left|{\mathcal T}_{\mathrm{opt}}\right|\right) \leq 4|R| +2 \left|{\mathcal T}_{\mathrm{opt}}\right| \leq (4c+2) \left|{\mathcal T}_{\mathrm{opt}}\right|  \enspace .
\end{equation*}
\qed
\end{proof}

Next we discuss how to proceed if the current status of the hole  is   $c$--critical. Assume that
the lower bound $\lambda$ was established by  edge pair $(a_{i_0},a_{i_0+1})$ with the
common polygon vertex $p_{i_0}$. Using elementary trigonometric reasoning one can  show that
the angle $\alpha$ between  $a_{i_0}$ and $ a_{i_0+1}$ is small. (Eventually we choose  $c =5$. This gives $\alpha < \frac{\pi}{6}$.) 
We define a fence $f$, subdividing $\mathcal P$ into two simple polygons ${\mathcal F}$ (the \textit{front yard}), and  ${\mathcal B}$ (the \textit{backyard}), see Fig.\ref{phase2}(b). $f$ is chosen to be the line segment perpendicular to the angular bisector of $\alpha$ through $p_{i_0}$.

\begin{lemma}\label{P0}
Let $H$ be a $c$--critical hole and $x$ be twice the shortest path length from $s$ to $f$ in $\mathcal P$. We have: $\mathcal F$ can be learned with tour length $\leq \mathcal C_0 \cdot x$ or $H$ is $c$--safe.
\end{lemma}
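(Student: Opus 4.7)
My plan is to run $0$-CPEX (the HIKK strategy) on $\mathcal F$ with a built-in budget of $\mathcal C_0 \cdot x$ on the accumulated path length. Since $\mathcal F$ is a simple polygon and $s$ lies on its boundary, HIKK applies and produces a watchman tour of $\mathcal F$ of length at most $\mathcal C_0 \cdot |\mathcal T^{\mathcal F}_{\mathrm{opt}}|$, where $\mathcal T^{\mathcal F}_{\mathrm{opt}}$ denotes the optimal watchman tour of $\mathcal F$ through $s$. If HIKK completes such a tour before the budget is exhausted, the first alternative of the lemma holds directly.

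If instead HIKK has not finished once the accumulated length exceeds $\mathcal C_0 \cdot x$, the competitive guarantee immediately yields $|\mathcal T^{\mathcal F}_{\mathrm{opt}}| > x$. The task is then to certify $c$-safety of $H$. For this I plan to convert the unknown global optimum $\mathcal T_{\mathrm{opt}}$ into a watchman tour of $\mathcal F$: since the hole $H$ is the only vision-blocking obstacle inside $\mathcal P$, every line of sight from $\mathcal B$ into $\mathcal F$ must cross the fence $f$, so any point of $\mathcal F$ that $\mathcal T_{\mathrm{opt}}$ sees from $\mathcal B$ is also visible from some point of $f$. Replacing the arcs of $\mathcal T_{\mathrm{opt}}$ lying in $\mathcal B$ by at most two traversals of $f$ thus yields a valid watchman tour in $\mathcal F$, giving $|\mathcal T^{\mathcal F}_{\mathrm{opt}}| \le |\mathcal T_{\mathrm{opt}}| + 2|f|$ and hence $|\mathcal T_{\mathrm{opt}}| > x - 2|f|$.

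The main obstacle is the purely geometric step that turns this into the $c$-safety bound $|R| \leq c \cdot |\mathcal T_{\mathrm{opt}}|$ for the chosen constant $c=5$. One must exploit the narrow wedge at $p_{i_0}$ (interior angle $\alpha < \pi/6$) together with the fact that $f$ sits on the line perpendicular to the angular bisector through $p_{i_0}$, in order to control $|f|$ and $x$ in terms of $|R|$ via elementary trigonometry in the wedge. Feeding the resulting bounds back into $|\mathcal T_{\mathrm{opt}}| > x - 2|f|$, and combining with the already-known lower bound $\lambda \leq |\mathcal T_{\mathrm{opt}}|$, should deliver $|R| \leq c \cdot |\mathcal T_{\mathrm{opt}}|$. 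Calibrating the trigonometric constants so that the same $c$ governs both the angle bound $\alpha < \pi/6$ and this final estimate is, I expect, the delicate part of the argument.
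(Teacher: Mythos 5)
Your first branch coincides with the paper's: run the simple--polygon strategy on $\mathcal F$ with budget $\mathcal C_0 \cdot x$, and if it finishes within budget the first alternative holds. The divergence, and the problems, lie in the second branch. Your conversion of $\mathcal T_{\mathrm{opt}}$ into a watchman tour of $\mathcal F$ at an additive cost of $2|f|$ only yields $|\mathcal T_{\mathrm{opt}}| > x - 2|f|$, and the term $|f|$ cannot be controlled: the fence runs from $p_{i_0}$ out to the outer boundary, so its length depends on the shape of $\mathcal P$ far from the hole and can dwarf both $x$ and $|R|$, making the bound vacuous (possibly negative). The paper avoids any such conversion by a cleaner dichotomy: if $|\mathcal T_{\mathrm{opt}}| \leq x$, then $\mathcal T_{\mathrm{opt}}$ cannot afford to reach $f$ at all (going to $f$ and back already costs $x$), hence it stays inside $\mathcal F$, where it is itself a valid watchman tour of $\mathcal F$ of length $\leq x$ --- contradicting $|\mathcal T_{\mathrm{opt}}(\mathcal F)| > x$, which you correctly derived from the unfinished exploration. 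This gives $|\mathcal T_{\mathrm{opt}}| > x$ with no additive loss and no geometry of $f$ needed.

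The more serious issue is that the step you defer (``should deliver $|R| \leq c\,|\mathcal T_{\mathrm{opt}}|$ \dots the delicate part'') is the actual mathematical content of the lemma, and it is absent from your argument. The paper carries it out in Lemmas \ref{alpha} and \ref{5safe}: with $A$ and $B$ the distances from the endpoints $q_l, q_r$ of the $\lambda$--path to the apex $p_{i_0}$, criticality gives $A + B + \lambda \geq |R| > 6\lambda$; the triangle inequality $B \leq A + \lambda$ then yields $A \geq \frac{1}{3}|R|$; the wedge angle bound $\alpha < \frac{\pi}{6}$ shows the distance from $q_l$ to the fence is at least $A\cos\frac{\pi}{6} \geq 0.86A$; and since $d(s,q_l) \leq \lambda \leq \frac{1}{6}|R|$ one obtains $\frac{x}{2} \geq 0.86A - \lambda \geq 0.12|R|$, hence $x \geq \frac{1}{5}|R|$, which combined with $|\mathcal T_{\mathrm{opt}}| > x$ is exactly $5$--safety. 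Note also the constant bookkeeping you flag as delicate is genuinely there: the paper assumes the hole is $6$--critical in order to conclude it is $5$--safe, so the same $c$ does not appear on both sides. Without this computation your proof does not establish the second alternative.
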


\begin{proof}(Sketch)
We invoke $0$--CPEX for $\mathcal F$ starting in $s$. If $\mathcal F$ gets explored with total tour length $\leq \mathcal C_0 \cdot x$, we are left with the task to explore $\mathcal B$. Otherwise we can prove that $H$ is $c$--safe. This follows from a simple case distinction. If ${\mathcal T}_{\mathrm{opt}}$ for $\mathcal P$ touches the fence, the claim is obvious. Otherwise  ${\mathcal T}_{\mathrm{opt}}$ is a tour inside  ${\mathcal F}$ and the claim follows from the competitiveness of  $0$--CPEX for $\mathcal F$. (Details in Appendix \ref{app_frontyard}) 
\qed
\end{proof}

If the hole becomes $c$--safe, we proceed as described in Lemma \ref{hybrid}.
In the other case it remains to explore the backyard ${\mathcal B}$ from $s$. Observe, we are now in a situation similar to our lower bound construction. We know there are reflex vertices in $\mathcal B$ that hide polygon edges we have to learn. But it is not clear whether to approach the corresponding cuts on the left or right side of the hole.

We describe how to learn a group of left reflex vertices, compare \cite{HIKK}. The existence of these vertices has been "observed" along the way while learning $R$, respectively $\mathcal F$. But, of course, this has not influenced the tours generated in these subroutines. 
Basically, cuts of such vertices can lie completely in $\mathcal B$ or they can cross the fence line $f$. As soon as we know that there is a cut not crossing $f$, $H$ becomes $c$--safe, since $\mathcal{T}_{\mathrm{opt}}$ intersects $f$.

\bigskip
A target vertex $l$ can be located in three different regions of $\mathcal B$ (Fig.\ref{backyard}). Cuts of vertices in $\mathcal B_3$ crossing $f$ on the left of $p_{i_0}$ have been explored along the way (as soon as they have been discovered), otherwise they cannot be visible yet. All other cuts of vertices in $\mathcal B_1$ and $\mathcal B_2$ crossing $f$ on the left are also crossing $R$, for the same reason. Therefore following the angle hull \cite{HIKK} of $R$ on the left side is a suitable way to explore these cuts (Details can be found in the Appendix \ref{app_background}). 

\begin{figure}[ht]
\centering
\includegraphics[scale=.9]{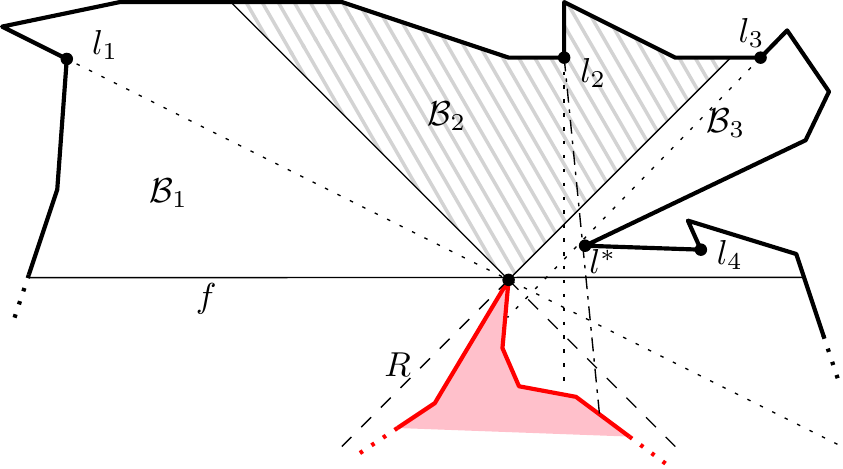}
\caption{Different types of left vertices in $\mathcal B$}
\label{backyard}
\end{figure}

\bigskip
A similar result can be shown for cuts crossing $f$ on the right part. If we are sure that a vertex has to be explored from the right, we put it into a special list $\mathcal V$ which is learned afterwards (only from the right side, without doubling).

Vertices in $\mathcal B_1$ are hidden behind the top vertex $p_{i_0}$ of $H$ ($l_1$ in \emph{Fig.}\ref{backyard}), otherwise, they would have been explored already. Therefore approaching $p_{i_0}$ on a semicircle is competitive, wherever the target vertex is located.
If it becomes visible, it is explored or it can be added to $\mathcal V$.

Vertices in $\mathcal B_2$ and $\mathcal B_3$ have either been seen from the right side and can be added to $\mathcal V$, too ($l_4$ in Fig.\ref{backyard}), or they have been discovered from the left and are hidden behind a vertex $l^*$ from the right ($l_2$ and $l_3$ in Fig.\ref{backyard}). In the second case they have to be approached on the angle hull again. 

Notice, that the possible paths (angel hull and semicircle) do not depend on special vertices. Therefore we can follow them via doubling until all left vertices become explored or we reach the fence. In the second case at least  one cut is not crossing $f$ and the hole is $c$--safe. It can be shown that the length of the path traveled in $\mathcal F$ is bounded by a constant times the shortest path length from $s$ to $f$, see \cite{G}.

In summary, we get the following result.

\begin{theorem}
$1$--CPEX is a $O(1)$--competitive strategy for exploring polygons with at most one colored hole and given starting point on the outer boundary.
\end{theorem}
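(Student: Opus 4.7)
The plan is to aggregate the path‐length costs accumulated by $1$--CPEX over all of its (sub‑)phases and bound each by a constant multiple of $|\mathcal T_{\mathrm{opt}}|$. Fix the safety constant $c=5$ as suggested in Section 3.3. Phase~1 contributes in two parts. While no point of $H$ is visible, the robot runs $0$--CPEX on what it currently perceives as a simple polygon; any watchman tour of $\mathcal P$ must travel far enough to make $H$ visible, so the pre‑discovery segment is bounded by $\mathcal C_0\cdot|\mathcal T_{\mathrm{opt}}|$ by the HIKK guarantee. Once $H$ is seen, the bicolored‑corridor doubling scheme of Section~3.2 learns $R$ with total path length $\leq 36\,|\mathcal T_{\mathrm{opt}}|$.

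Phase~2 splits according to the status of $H$. If at some moment the robot can certify $|R|\leq c\lambda$, the hole is $c$--safe and Lemma~\ref{hybrid} bounds the remaining cost by $(4c+2)\mathcal C_0\,|\mathcal T_{\mathrm{opt}}|$. Otherwise $H$ remains $c$--critical and the robot constructs the fence $f$ through the witness vertex $p_{i_0}$. Invoking $0$--CPEX on the front yard $\mathcal F$ and applying Lemma~\ref{P0} gives two outcomes: either $\mathcal F$ is explored at cost $\leq \mathcal C_0\cdot x$ with $x\leq|\mathcal T_{\mathrm{opt}}|$ (since $\mathcal T_{\mathrm{opt}}$ itself has to approach $f$ to certify what lies behind it), or the $\mathcal C_0 x$ budget is exceeded, at which point the lemma certifies $H$ as $c$--safe and we fall back into the safe branch, paying at most an extra $\mathcal C_0 \,|\mathcal T_{\mathrm{opt}}|$ on top of Lemma~\ref{hybrid}.

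It remains to account for the backyard $\mathcal B$ in the critical subcase. Here one runs the alternating strategy of Section~3.3: walk the left angle hull of $R$ (to learn cuts whose hidden vertex lies in $\mathcal B_1\cup\mathcal B_2\cup\mathcal B_3$ and must be approached from the left), queue candidate right vertices in $\mathcal V$, and advance via doubling in the distance parameter. Each doubling round is competitive against the progress needed along $R$, and before the walked length could ever exceed $c x$ the criticality test would fail and we would already have switched to the safe case. The queued list $\mathcal V$ is then learned by a single right‑side sweep of length $O(x)$. This yields a backyard contribution of $O(x)=O(|\mathcal T_{\mathrm{opt}}|)$. Summing the finitely many explicit constants produced by Phases~1 and~2 gives a universal competitive ratio $\mathcal C_1$ (of the order mentioned in \cite{G}).

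The main obstacle is the backyard analysis. Phases~1 and the safe branch are directly packaged by the quoted lemmas, and the front‑yard accounting is purely a triangle‑inequality argument once Lemma~\ref{P0} is in place; but in the critical branch one has to argue simultaneously that (i) the interleaved left‑angle‑hull / semicircle schedule really reaches every hidden reflex vertex in each of the regions $\mathcal B_1,\mathcal B_2,\mathcal B_3$, (ii) the $m$--star/doubling bookkeeping terminates before its cumulative cost could violate $c$--criticality, and (iii) the deferred right‑side sweep of $\mathcal V$ cannot be forced to be arbitrarily long relative to $x$. Pinning down the interaction between the doubling analysis and the safety threshold is where the proof is most technical, and it is where the precise choice $c=5$ (so that the bisector angle $\alpha<\pi/6$ at $p_{i_0}$ keeps the fence well‑behaved) is actually used.
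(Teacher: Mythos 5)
Your proposal follows essentially the same route as the paper: the paper offers no separate proof of this theorem but obtains it by summing the phase-wise bounds (HIKK until $H$ is visible, the $36\left|\mathcal T_{\mathrm{opt}}\right|$ bound for learning $R$, Lemma~1 for the safe case, Lemma~2 for the front yard, and the angle-hull/doubling analysis of the backyard deferred to the appendix and to \cite{G}), which is exactly your decomposition, and you correctly identify the backyard as the technically delicate part. One small caveat: your parenthetical justification that $x\leq\left|\mathcal T_{\mathrm{opt}}\right|$ because $\mathcal T_{\mathrm{opt}}$ ``has to approach $f$'' is not valid ($\mathcal T_{\mathrm{opt}}$ may stay strictly inside $\mathcal F$), but the case analysis in Lemma~2 that you invoke already covers both possibilities, so the conclusion stands.
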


\section{General Case: Constant Number of Colored Holes}
We extend the strategy to deal with more than one, say at most $h$, pairwise differently colored holes. The idea of $c$--safe holes can be adapted to reduce the exploration problem to a polygon with $(h-1)$ differently colored holes.

All visible holes are organized in a list $\mathcal{H}$. Each hole in $\mathcal{H}$ is marked with its current state:
\textit{discovered}, \textit{critical}, or \textit{safe}.

\subsection{Shortest Tours Around Holes}

For any $H \in {\mathcal H}$ we denote by $R_H$ the shortest tour around the hole $H$ that starts and ends in $s$.
If this tour is not unique, we choose the unique one  encircling the largest area (Fig.\ref{rch-star}(a)).
We remark that   $R_H$ can encircle other holes, too. It is also possible
that it doesn't touch $H$ at all (Fig.\ref{rch-star}(b)), and it can differ from the outer boundary of
$\mathrm{RCH} \left(H \cup \left\{ s \right\} \right)$ as well. In the situation that $R_H$ encircles
also another hole $H'$, the shortest path properties imply that either $R_{H'}=R_H$ or the region encircled by $R_{H'}$
is properly contained in the region encircled by $R_H$.

\begin{figure}[ht]
\centering
\includegraphics[scale=1]{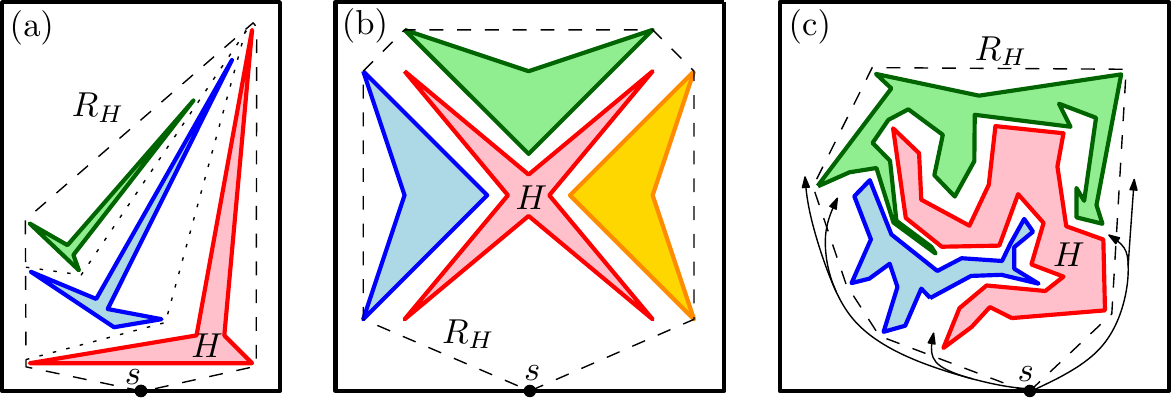}
\caption{(a) Three shortest paths encircling hole $H$ (b) $R_H$ does not touch $H$ (c) the star search approach.}
\label{rch-star}
\end{figure}

Now, the exploration of $R_H$ is more difficult because we can't predict whether it runs always through bicolored
corridors  with the color of $H$ on one side or if it also uses corridors with two other colors.
Again we start the exploration of $R_H$ cw. and ccw. around $H$, but,
whenever another hole $H'$ occurs in the search range we have to check both possibilities: $R_H$ could
run cw. or ccw. around $H'$ (Fig.\ref{rch-star}(c)). Thus, we have to replace the doubling approach from
paragraph \ref{RCH}
by a star search strategy  \cite{PY}. The bicolored corridors form the edge set $E$ of a
planar graph with $h$ faces and vertices of degree at least 3. We can conclude $|E|\leq 3h/2$ from Euler's
formula. Since each corridor will be used at most twice (from both sides) a $3h$-star search will
suffice what increases the competitive ratio for this phase by a factor of $2e\cdot 3h +1$,   \cite{PY}.

Once knowing $R_H$, we derive the lower bound $\lambda_H \leq \left|\mathcal{T}_{\mathrm{opt}}\right|$
in the same way as in Paragraph \ref{RCH}, Fig.\ref{hull}(b). Moreover, we can use all the conclusions for safe holes drawn in the $1$--hole case.
Lemma \ref{hybrid} for $c$--safe holes can be extended to $h$ holes, too.

\begin{lemma}\label{khybrid}
If a $c$--safe hole is found, the polygon can be explored with $(h-1)$--CPEX, guaranteeing a total path length $\leq (4c+2)\mathcal{C}_{h-1} \left|{\mathcal T}_{\mathrm{opt}}\right|$.
\end{lemma}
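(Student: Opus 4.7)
My approach is to mimic the proof of Lemma~\ref{hybrid}, but to replace the base-case call to $0$--CPEX with a recursive call to $(h-1)$--CPEX. Given the $c$--safe hole $H$ with encircling tour $R_H$ satisfying $|R_H| \leq c\,|\mathcal{T}_{\mathrm{opt}}|$, I would take a shortest path $b$ from $s$ to $H$ in $\mathcal{P}$ (perturbed slightly off any reflex vertices it meets), install $b$ as a barrier, and slit $\mathcal{P}$ along $b$. This merges $H$ into the outer boundary and produces a polygon $\mathcal{P}'$ with exactly $h-1$ pairwise differently colored holes, with $s$ still on its outer boundary, so $(h-1)$--CPEX is applicable to $\mathcal{P}'$.

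The first step is to exhibit a closed watchman tour $\mathcal{T}_e$ of $\mathcal{P}'$ of length at most $(4c+2)|\mathcal{T}_{\mathrm{opt}}|$. Following Lemma~\ref{hybrid} verbatim, I start from $\mathcal{T}_{\mathrm{opt}}$ for $\mathcal{P}$ (possibly chopped into several arcs by $b$), link the arcs using $R_H$ together with two parallel copies of $b$ (one on each side of the slit, to restore full visibility across the barrier), and double the resulting structure to obtain an Eulerian closed walk $\mathcal{T}_e$ with
\begin{equation*}
|\mathcal{T}_e| \;=\; 2\bigl(|R_H|+2|b|+|\mathcal{T}_{\mathrm{opt}}|\bigr).
\end{equation*}
Using $c$--safety of $H$ together with the bound $2|b|\leq |R_H|$, this simplifies to $|\mathcal{T}_e| \leq (4c+2)|\mathcal{T}_{\mathrm{opt}}|$, so the optimal watchman tour of $\mathcal{P}'$ is at most $(4c+2)|\mathcal{T}_{\mathrm{opt}}|$ as well. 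Invoking $(h-1)$--CPEX on $\mathcal{P}'$ then produces, by the inductive competitive ratio $\mathcal{C}_{h-1}$, an exploration tour of length at most $\mathcal{C}_{h-1}\cdot|\mathcal{T}_e| \leq (4c+2)\mathcal{C}_{h-1}|\mathcal{T}_{\mathrm{opt}}|$, which is exactly the claimed bound. A tour that explores $\mathcal{P}'$ also explores the original $\mathcal{P}$, since the barrier only restricts motion and vision.

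The step I expect to be the main obstacle is the inequality $2|b|\leq |R_H|$. In the single-hole case it was immediate because $R$ coincided with $\partial\mathrm{RCH}(H\cup\{s\})$ and hence necessarily touched $H$. For $h\geq 2$ the paper itself notes (Fig.~\ref{rch-star}(b)) that $R_H$ may be forced by other holes and need not touch $H$ at all, in which case the naive argument breaks. I would close the gap either by exploiting that any other hole encircled by $R_H$ lies in the same interior region as $H$ and provides a shortcut from a midpoint of $R_H$ to $H$ keeping $2|b|\leq |R_H|$, or more robustly by redefining the barrier as a prefix of $R_H$ of length at most $|R_H|/2$ concatenated with a short crossing to $H$, and charging the crossing against $|R_H|$. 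With either fix in place the rest of the calculation carries over mechanically from Lemma~\ref{hybrid}.
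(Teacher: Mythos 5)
Your core construction --- slit $\mathcal{P}$ along a barrier $b$ from $s$ to the hole, link the pieces of ${\mathcal T}_{\mathrm{opt}}$ with $R_H$ plus two copies of $b$, double to get an Eulerian tour of length $2\bigl(|R_H|+2|b|+|{\mathcal T}_{\mathrm{opt}}|\bigr)$, and hand the resulting $(h-1)$-hole polygon to $(h-1)$--CPEX --- is exactly the paper's argument, and you have correctly isolated the one step that does not carry over verbatim from Lemma~\ref{hybrid}, namely $2|b|\leq|R_H|$. The paper resolves this differently from either of your fixes: it does not insist on building the barrier to $H$ itself. If the shortest connection from $s$ to $H$ inside the region bounded by $R_H$ is obstructed (equivalently, $R_H$ is held away from $H$ by other holes), then any such obstructing hole $H'$ is encircled by $R_H$, so $R_H$ is a candidate tour around $H'$ and hence $|R_{H'}|\leq|R_H|\leq c\,|{\mathcal T}_{\mathrm{opt}}|$; thus $H'$ is itself $c$--safe, and the paper simply switches to an innermost such hole, for which half of its own encircling tour serves as a barrier satisfying $|b|\leq\frac{1}{2}|R_{H'}|$, and runs the Lemma~\ref{hybrid} construction there. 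Your fix (a) is close to this in spirit. Your fix (b) --- a prefix of $R_H$ of length $\leq|R_H|/2$ followed by a ``short crossing'' to $H$ --- has a real gap: the crossing must be a geodesic that avoids precisely the holes that are forcing $R_H$ away from $H$ (the situation of Fig.~\ref{rch-star}(b)), so while the Euclidean distance from $R_H$ to $H$ is at most $|R_H|/2$, the geodesic detour around the interfering holes is not obviously chargeable against $|R_H|$. If you replace fix (b) by the hole-switching step, the rest of your calculation goes through unchanged and yields the stated bound.
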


\begin{proof}
The hole $H$ has been discovered and categorized $c$--safe. Therefore we found a path $b$ connecting it with the starting point, running completely in $R_H$. We have to ensure, that
 \begin{equation}
 \label{eq_barrier}
 \left|b\right|\leq \frac{1}{2} \left|R_H\right| \enspace .
 \end{equation}
   
As mentioned before, any obstacle interfering with $R_H$ and $b$ has to be a $c$--safe hole, too. That's why a $c$--safe hole $H$ with path $b$ satisfying (\ref{eq_barrier}) can be found and the construction of Lemma \ref{hybrid} can be used.\qed
\end{proof}

Lemma \ref{khybrid} allows the recursive call of ($h-1$)--CPEX, if a $c$--safe hole is found. In that case the status of all other holes in $\mathcal H$ is reset to \textit{discovered}, because in the new derived polygon the shortest tour encircling a hole can have changed.


\subsection{The Algorithm: h--CPEX} 
For our algorithm we initialize $\mathcal{H}$ as an empty list and set $\lambda=0$. There are two basic rules $h$--CPEX will follow: 
\begin{description}
\item[(R1)]
As soon as a hole is discovered, we will classify it. Only exception: We are currently classifying another hole.
\item[(R2)]
As soon as a hole gets classified as $c$--safe, we recurse and invoke ($h-1$)--CPEX.
\end{description}
\noindent
Overall, the CPEX exploration is divided into three major steps (pseudo code can be found in  Appendix \ref{pseudocode}).

\begin{description}
\item[1. Classifying Holes] $~$\\
If no hole has been discovered yet, we apply HIKK for simple polygons until the first hole becomes visible and add it to our list $\mathcal{H}$ marked as \textit{discovered}.
Now $R_H$ has to be learned for every discovered hole $H$ with the help of the star search algorithm, visiting all possible corridors, until a point $p$ on $R_H$ is visible from both sides of the hole. If such a point is found, the strategy has to be applied another round: The shortest path could have been missed because of the malicious adversary. 
 Afterwards we compute the lower bound $\lambda_H$ and define $\lambda=\max\left(\lambda, ~ \lambda_H\right)$. If new holes are found, they are added to $\mathcal{H}$, too. 
 If $\left|R_H\right| \leq c \cdot \lambda$, the hole is safe and we apply R2. Otherwise we mark $H$ as critical. If $\lambda$ has changed, we have to check all holes  previously marked critical. They might be safe now and we can recurse, too.

\item[2. Exploring Front Yards] $~$\\
At this stage, $\mathcal{H}$ only contains critical holes. For each group $G_R$ formed by holes that have the same shortest tour $R$ surrounding them we create $\mathcal{F}_{R}$ by inserting a fence $f_R$ by cutting the polygon with the corresponding half plane through the top of $R$ (see strategy for one hole). Because the fence connects at least one hole with the outer boundary, the number of holes is decreased and we have to update list $\mathcal{H}$ (Fig.\ref{front-back}(a) and (b)). \\
Now, $(h-1)$--CPEX for $\mathcal{F}_{R}$ can be used. If its path length exceeds $\mathcal{C}_{h-1} \cdot x$, all holes in $G_R$ become $c$--safe and we recurse. Otherwise $\mathcal{F}_{R}$ is explored completely.
If a new hole is discovered, we add it to $\mathcal{H}$ and apply R1.

\begin{figure}[ht]
\centering
\includegraphics[scale=1]{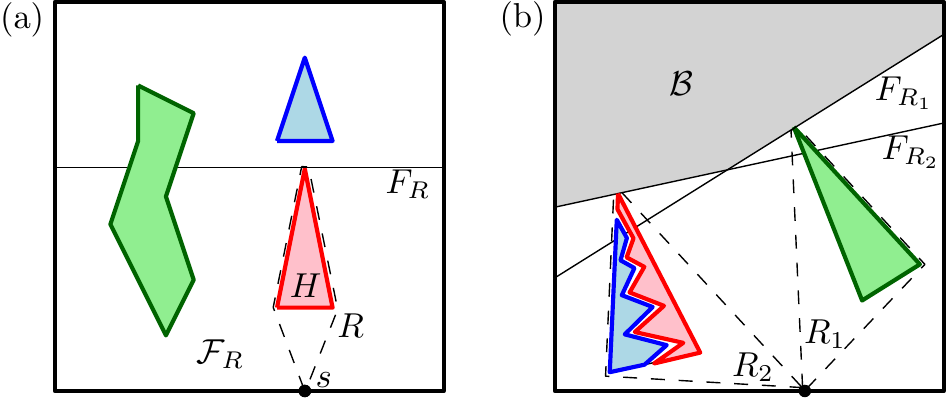}
\caption{The fence, front yard and backyard}
\label{front-back}
\end{figure}

\item[3. Exploring the Backyard] $~$\\
Finally we explore the backyard $\mathcal{B}= \mathcal{P} \setminus \bigcup_{R} \mathcal{F}_{R}$ (Fig.\ref{front-back}(b)) as described before, see \ref{sec_hybrid}. The doubling approach has to be replaced by star search again. As in step 2, if a new hole is discovered, add it to $\mathcal{H}$ and apply R1.

\end{description}

\subsection{The Competitive Factor}

\begin{theorem}
The strategy  $h$--CPEX is $(h+c_0)!$-competitive.
\end{theorem}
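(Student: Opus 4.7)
The plan is to prove the bound $\mathcal{C}_h \leq (h+c_0)!$ by induction on $h$. The base case $h=0$ is handled by the stated fact $\mathcal{C}_0 = 26.5$: choosing $c_0 \geq 5$ gives $c_0! \geq 120 > \mathcal{C}_0$, so $(0+c_0)! \geq \mathcal{C}_0$ as required. For the inductive step I would assume $\mathcal{C}_{h-1} \leq (h-1+c_0)!$ and decompose the path traversed by $h$-CPEX according to the three steps of the algorithm, bounding each contribution as a multiple of $|\mathcal{T}_{\mathrm{opt}}|$.

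For the classifying step I would use the fact that the $3h$-star search learns each $R_H$ within a path of length at most $(2e\cdot 3h + 1)\,\lambda_H \leq O(h)\,|\mathcal{T}_{\mathrm{opt}}|$, so summed over the at most $h$ discovered holes the classifying cost is $O(h^2)\,|\mathcal{T}_{\mathrm{opt}}|$. If a $c$-safe hole is ever found, Lemma~\ref{khybrid} contributes at most $(4c+2)\,\mathcal{C}_{h-1}\,|\mathcal{T}_{\mathrm{opt}}|$, and the multiplicative factor in front of $\mathcal{C}_{h-1}$ is a \emph{universal} constant, independent of $h$. Otherwise all holes remain critical, and the front-yard and backyard contributions (handled by $(h-1)$-CPEX and another star search) add at most $O(1)\,\mathcal{C}_{h-1}\,|\mathcal{T}_{\mathrm{opt}}| + O(h)\,|\mathcal{T}_{\mathrm{opt}}|$, where the $O(1)$ multiplier on $\mathcal{C}_{h-1}$ relies on showing that the sum, over all critical groups $G_R$, of the distances $x_R$ from $s$ to the fence $f_R$ is itself bounded by a constant times $|\mathcal{T}_{\mathrm{opt}}|$. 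Collecting all contributions I would derive a recurrence of the form
\begin{equation*}
\mathcal{C}_h \;\leq\; K\,\mathcal{C}_{h-1} + K'\,h^2
\end{equation*}
for universal constants $K,K'$. Using $(h+c_0)! = (h+c_0)\cdot(h-1+c_0)!$, it suffices to verify $K + K' h^2/(h-1+c_0)! \leq h + c_0$, which holds for all $h\geq 1$ once $c_0$ is chosen at least $\max(K,5)$.

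The main obstacle is ensuring that the multiplicative factor in front of $\mathcal{C}_{h-1}$ really is a constant and not $\Theta(h)$: a naive bound would multiply $\mathcal{C}_{h-1}$ by the number of critical groups (which can be $\Theta(h)$), yielding a recurrence $\mathcal{C}_h \leq Ah\cdot\mathcal{C}_{h-1}$ whose solution $\Theta(A^h h!)$ grows faster than any $(h+c_0)!$ with $c_0$ constant. Thus the geometric step of showing $\sum_R x_R = O(|\mathcal{T}_{\mathrm{opt}}|)$ — essentially, that the various critical fences cannot all be far from $s$ simultaneously without $\mathcal{T}_{\mathrm{opt}}$ itself being long — is the crucial point; all $O(h)$ and $O(h^2)$ terms that arise from star-search competitive factors are purely additive and are absorbed by choosing $c_0$ large enough.
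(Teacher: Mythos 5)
Your decomposition into the three algorithmic steps and the bookkeeping of the star-search terms match the paper, but your proposal has a genuine gap at exactly the point you flag as ``the main obstacle,'' and that obstacle is one you have created for yourself. You insist on a recurrence $\mathcal{C}_h \leq K\,\mathcal{C}_{h-1} + K'h^2$ with a \emph{constant} multiplier on $\mathcal{C}_{h-1}$, and to get it you need the unproven geometric claim $\sum_R x_R = O\left(\left|\mathcal{T}_{\mathrm{opt}}\right|\right)$ over all critical fence groups. You neither prove this claim nor is it clear it holds; as written, the argument is incomplete. Moreover, your diagnosis of why the ``naive'' bound fails is off: a factor of $h$ in front of $\mathcal{C}_{h-1}$ is only fatal if the per-front-yard cost exceeds $\mathcal{C}_{h-1}\left|\mathcal{T}_{\mathrm{opt}}\right|$ by a constant $A>1$ (then indeed $\prod_k Ak$ contributes $A^h h!$, which beats any shifted factorial). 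But Lemma \ref{frontyard} gives each front-yard exploration a cost of at most $\mathcal{C}_{h-1}\left|\mathcal{T}_{\mathrm{opt}}\right|$ with coefficient exactly $1$, so summing over at most $h$ front yards yields the term $h\,\mathcal{C}_{h-1}$, not $Ah\,\mathcal{C}_{h-1}$.

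The paper's proof simply accepts this term: its recurrence is $\mathcal{C}_h \leq c_1h^2 + c_2\,\mathcal{C}_{h-1} + h\,\mathcal{C}_{h-1} + c_3h$, i.e.\ a multiplier of $h+c_2$ on $\mathcal{C}_{h-1}$, and the telescoped product $\prod_{k=1}^{h}(k+c_2) = (h+c_2)!/c_2!$ is precisely the kind of shifted factorial that the target bound $(h+c_0)!$ is designed to absorb. So the geometric lemma you identify as crucial is not needed at all for the stated theorem. It is worth noting that if you \emph{could} establish $\sum_R x_R = O\left(\left|\mathcal{T}_{\mathrm{opt}}\right|\right)$, your recurrence would yield an exponential bound $O(K^h)$, substantially stronger than the factorial bound claimed here --- but that is a different (and harder) result, and your proposal does not supply the missing argument. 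To repair your proof of the theorem as stated, drop the constant-multiplier ambition, keep the term $h\,\mathcal{C}_{h-1}$, and verify the induction against $(h+c_0)! = (h+c_0)(h-1+c_0)!$, which comfortably dominates $(h+c_2)\,\mathcal{C}_{h-1}$ plus the additive $O(h^2)$ terms once $c_0$ is chosen large enough.
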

\begin{proof}
Recall that $\mathcal{C}_h$ denotes the competitive factor of  $h$--CPEX, for $0$--CPEX we use the
HIKK--factor $\mathcal{C}_0= 26.5$. Analyzing the different stages of  $h$--CPEX, we obtain the following
recursive estimation:
\begin{equation*}
 {\mathcal{C}_h} \leq c_1 h^2 + c_2 {\mathcal{C}_{h-1}} + h {\mathcal{C}_{h-1}} + c_3 h \enspace .
\end{equation*}

The first term comes from the classification of the $h$ holes, each using a $3h$--star search with a
$O(h)$ competitive factor. The second term comes into play whenever we have a  recursive call of
 $(h-1)$--CPEX for a safe hole. The constant $c_2=22$ stems from Lemma 3  dealing with
$5$-safe holes.
In the case that all holes are $5$-critical we have to explore at most
$h$ front yards, each implying a recursive call of   $(h-1)$--CPEX and, finally,
the exploration of the backyard that is basically an $h$--star search for groups of left and right vertices.
This estimation is obviously dominated by the second and the third term, what implies
$  {\mathcal{C}_h} \leq c_4 \cdot(h+c_2)! \cdot {\mathcal{C}_0} \leq (h+c_0)!$ for sufficiently large constants $c_4$ and $c_0$.
\qed
\end{proof} 

\section{Conclusion and future work}
We have addressed the problem of online exploring polygonal scenes cluttered with at most $h$ polygonal obstacles (holes). In the standard model exploring the scene includes  the subtask of
recognizing which parts of the boundary belong to holes and which edges form the outer boundary. In this paper we proposed a modified model making this subtask trivial by giving each hole a special color.

Under this assumption we could give for each $h>0$ a competitive exploration strategy. We consider this to be a major breakthrough towards settling the general conjecture from \cite{DKP-FOCS} that such competitive strategies exist in the uncolored case, too. The missing link could be a combination of star search with a HIKK--like strategy to learn which holes are there in an uncolored scene.

 Moreover, we are sure that the competitive factor can be considerably improved. We remark, that for $h \geq 2$ holes the task of exploring the polygon is no longer equivalent to exploring all edges, compare Fig.\ref{4holes}. However, this problem is not an issue for $h$--CPEX because of its recursive structure.

\begin{figure}[ht]
\centering
\includegraphics[scale=1]{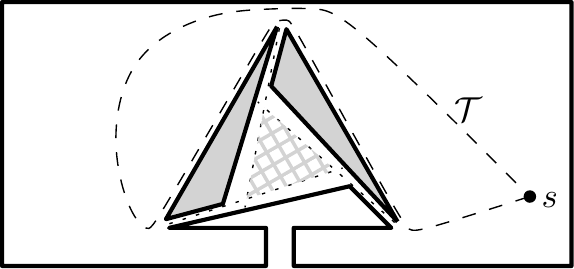}
\caption{Seeing all boundary edges does not guarantee full exploration}
\label{4holes}
\end{figure}


\bibliographystyle{splncs}

\begin{thebibliography}{1}

\bibitem{AKS}
Albers, S., Kursawe, K., Schuierer, S.: Exploring Unknown Environments with Obstacles.
In: Proc. of the tenth SODA, pp. 842--843. SIAM, Philadelphia (1999)

\bibitem{B}
Berman, P.: On-line Searching and Navigation.
In: Fiat, A., Woeginger, G. (eds.) Competitive Analysis of Algorithms.
Springer, London (1998)


\bibitem{CKM}
Chrobak, M., Kenyon-Mathieu, C.: SIGACT news online algorithms column 10: competitiveness via doubling.
In: SIGACT News 37, 4, pp. 115--126. 
ACM, New York (2006)

\bibitem{DKP-FOCS}
Deng, X., Kameda, T., Papadimitriou, C.: How to Learn an Unknown Environment.
In: Proceedings  32nd FOCS, pp. 298--303.
IEEE Computer Society (1991)

\bibitem{DKP}
Deng, X., Kameda, T., Papadimitriou, C.: How to Learn an Unknown Environment I: The Rectilinear Case.
In: JACM 45, 2, pp. 215--245.
ACM, New York (1998)

\bibitem{G}
Georges, R.: Online-Erkundung von Polygonen.
Diploma thesis, Freie Universit\"at Berlin (2012) [http://www.georges-1-cpex.de.vu]

\bibitem{GHK}
Georges, R., Hoffmann, F., Kriegel, K.: On the Exploration Problem for Polygons with One Hole.
In: Abstracts 28th European Workshop Comput. Geom., 
Università Perugia (2012)

\bibitem{HIL}
Hagius, R., Icking, C., Langetepe, E.: Lower Bounds for the Polygon Exploration Problem.
In: Abstracts 20th European Workshop Comput. Geom., 
Universidad de Sevilla (2004)

\bibitem{HNP}
Hammar, M., Nilsson, B. J., Persson, M.: Competitive exploration of rectilinear polygons.
In: Theor. Comput. Sci. 354, 3, pp. 367--378.
Elsevier (2006)

\bibitem{HIKK}
Hoffmann, F., Icking, C., Klein, R., Kriegel, K.: The Polygon Exploration Problem.
In: SIAM J. Comput. 31, pp. 577--600.
SIAM, Philadelphia (2002)

\bibitem{K}
Klein, R.: Algorithmische Geometrie.
Springer (2005)

\bibitem{PY}
Papadimitriou, C. H., Yannakakis, M.: Shortest Paths Without a Map. 
In: Proceedings ICALP 1989, LNCS 372, pp. 610--620.



\end{thebibliography}

\newpage
\begin{appendix}
\section{Appendix}
\subsection{Lower Bounds}
\label{app_orth_bound}

\begin{theorem_app}
Any deterministic online strategy $\mathcal{S}_1$ that computes valid watchman routes in orthogonal polygons with at most one hole has a competitive ratio of at least 2.
\end{theorem_app}

\begin{proof}
We confront $\mathcal{S}_1$ with a polygon composed of long thin winding corridors as indicated in Fig.\ref{lower_bound_1}. In starting point $s$ it has two choices, it can  follow corridor $L$ or $R$. Exploring them simultaneously, say by using a doubling strategy, see \cite{K}, is too expensive and because of the windings $\mathcal{S}_1$ cannot look far ahead. 
 After traveling a corridor, say $L$, a distance $d\gg 0$, $\mathcal{S}_1$ encounters a new branching region with two new corridors $R_1$ and $R_2$ pointing back. Again,  $\mathcal{S}_1$ has to decide which one eventually to follow, say it chooses $R_2$. Notice, in that moment $R_1$ has not been  completely explored and we make $R_1$ a dead end corridor by cutting it off behind the next winding.  $\mathcal{S}_1$ follows $R_2$ (which is in fact $R$) until it reaches the proximity of $s$ again. Now $\mathcal{S}_1$  will notice that it has circled the hole completely and missed the very end of $R_1$ only. To accomplish its task it has to return
to $R_1$, therefore traveling additionally  a distance of $2d$ plus twice the length of $R_1$. An optimal tour will first learn $R_1$ before it returns to $s$.
\qed
\end{proof}

\begin{figure}[ht]
\centering
\includegraphics[scale=.8]{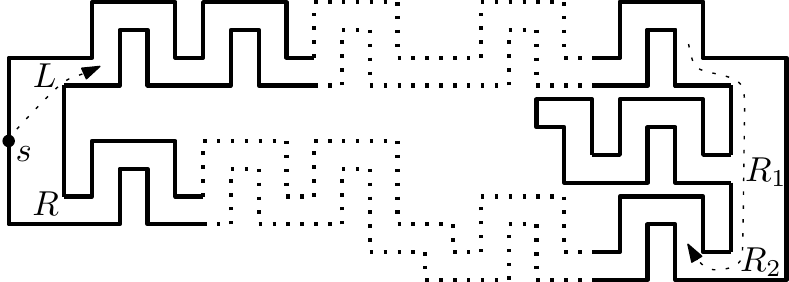}
\caption{Lower bound example, orthogonal case}
\label{lower_bound_1}
\end{figure}

Notice, that this lower bound construction does not hold for the case of a colored hole. The strategy then could identify the dead end and explore it first.

\subsection{Front Yard Exploration (Proof of Lemma \ref{P0})}
\label{app_frontyard}

We will first prove two auxiliary lemmas.

\begin{lemma}\label{alpha}
Assume that $H$ is a $5$--critical hole after the first classification, i.e. $\left|R_H \right| > 5\lambda_H$.
Let $a_i$ and $a_{i+1}$ be the two segments of $R_H$ that define $\lambda_H$, $p_i$ their
common apex and $\alpha$ the enclosed angle. Then $\alpha < \frac{\pi}{6}$.
\end{lemma}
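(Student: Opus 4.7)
The plan is to prove the contrapositive: assume $\alpha \geq \pi/6$ and derive that $|R_H| \leq 5\lambda_H$, contradicting the hypothesis that $H$ is $5$-critical.

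I would begin with the geometric setup at $p_{i_0}$. Orient $R_H$ counter-clockwise so that each ``right half-plane'' lies on the exterior of $\mathrm{RCH}(H\cup\{s\})$. Writing $\beta = \pi - \alpha$ for the interior angle of $R_H$ at $p_{i_0}$, a direct angular computation shows that the intersection of the two right half-planes is a wedge $W$ at $p_{i_0}$ of opening $\beta$. In particular, any single viewpoint that simultaneously sees $p_{i_0}$ from both required half-planes must lie in $\overline{W}$, while two separate viewpoints lie on opposite sides of the extensions of $a_{i_0}$ and $a_{i_0+1}$.

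The heart of the proof is a lower bound of the form $\mu(p_{i_0}) \geq 2D\cos(\alpha/2)$, where $D = \min(d_1,d_2)$ is the shorter arc of $R_H$ from $s$ to $p_{i_0}$ and $d_1 + d_2 = |R_H|$. To obtain it, I would use the relative convexity of $R_H$: by the defining property of $\mathrm{RCH}(H\cup\{s\})$, both arcs of $R_H$ are geodesics in $\mathcal P$, so any path in $\mathcal P$ from $s$ to a point near $p_{i_0}$ that lies in the prescribed right half-plane has length at least the corresponding arc length. An unfolding (reflection) argument across the bisector of $W$, combined with the triangle inequality, then yields the bound $2D\cos(\alpha/2)$ for a single-excursion tour; the two-excursion alternative is immediately longer and hence not the binding case. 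Since some vertex of $R_H$ is at arc distance $\geq |R_H|/2$ from $s$, combining the bound with the definition $\lambda_H = \max_i \mu(p_i)$ and a monotonicity argument in $\alpha$ transfers the estimate to $p_{i_0}$ itself, giving $\lambda_H \geq |R_H|\cos(\alpha/2) \cdot \kappa$ for a small explicit constant $\kappa$.

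Plugging $\alpha = \pi/6$ and computing $\cos(\pi/12)\approx 0.966$ closes the estimate with margin to spare: $\lambda_H \geq |R_H|/5$, contradiction. For $\alpha > \pi/6$ the bound only improves, so the conclusion $\alpha < \pi/6$ follows. The main obstacle will be the unfolding step, for two reasons. First, the polygon outside $R_H$ is unconstrained, so in principle a tour could exploit exterior corridors to shortcut; relative convexity of $R_H$ controls this only on the hull side, so one has to invest care to rule out long shortcuts using the structure of the right half-planes and the position of $s$. Second, the lower bound must be coupled with the placement of $s$ along $R_H$ and with the max over vertices, and the worst-case configuration (nearly antipodal $s$ and $p_{i_0}$ with small turn elsewhere) is precisely what pins the threshold $\alpha = \pi/6$ to the choice $c=5$. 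Once these two technicalities are settled, the trigonometric inequality $\cos(\pi/12) > 1/5$ concludes the argument.
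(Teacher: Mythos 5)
Your plan (contrapositive, then a metric lower bound on the effort $\mu(p_{i_0})$ needed to see the apex from both right half-planes) is not the paper's route, and its central step does not hold. The inequality $\mu(p_{i_0}) \geq 2D\cos(\alpha/2)$, with $D$ the shorter arc of $R_H$ from $s$ to $p_{i_0}$, presupposes that the viewpoints certifying $p_{i_0}$ must travel towards $p_{i_0}$. They need not: the two right half-planes are unbounded, and a point seeing $p_{i_0}$ from the right half-plane of $a_{i_0}$ can lie far along the extension of that line --- in particular within distance $\lambda_H$ of $s$ while $p_{i_0}$ is at arc distance close to $|R_H|/2$. That is precisely the configuration of a critical hole (and is used explicitly in the proof of Lemma \ref{5safe}, where $d(s,q_l)\le\lambda_H$ yet $|\overline{q_l p_{i_0}}|\ge \frac{1}{3}|R_H|$), so $\mu(p_{i_0})$ admits no lower bound proportional to $D$. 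The patch via ``some vertex at arc distance $\ge |R_H|/2$'' together with $\lambda_H=\max_i\mu(p_i)$ also fails: the hypothesis controls the angle only at $p_{i_0}$, the far vertex is in general a different vertex whose angle can be close to $\pi$ (where $\cos(\alpha_j/2)$ vanishes), and your bound in fact \emph{degrades} as $\alpha$ grows, contrary to the claim that for $\alpha>\pi/6$ it ``only improves.'' (A minor further slip: the intersection of the two right half-planes at $p_{i_0}$ is the wedge vertically opposite the hull spike, of opening $\alpha$, not $\pi-\alpha$.)

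The paper's argument runs in the opposite direction and is much shorter. Let $q_l,q_r$ be the endpoints of the $\lambda_H$-path and set $A=|\overline{q_l p_{i_0}}|$, $B=|\overline{q_r p_{i_0}}|$. The closed curve $q_l\to p_{i_0}\to q_r\to q_l$ (the last leg being the $\lambda_H$-path) encircles the hole and passes by $s$, hence $A+B+\lambda_H\ge |R_H|>5\lambda_H$, so $\max(A,B)> 2\lambda_H$. In the triangle $\Delta(p_{i_0},q_l,q_r)$ the side $\overline{q_l q_r}$ has length at most $\lambda_H$ and subtends the angle $\alpha$ at the apex, so the sine rule gives $\sin\alpha\le |\overline{q_l q_r}|/A\le \lambda_H/(2\lambda_H)=\frac{1}{2}$, i.e.\ $\alpha<\frac{\pi}{6}$. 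If you want to rescue your approach, the quantity that must be bounded below by $|R_H|$ is $A+B+\lambda_H$ (distances from the two viewpoints to the apex plus the connecting path), not any distance from $s$ to the apex.
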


\begin{proof}
Let  $q_l$ and $q_r$ be the two endpoints of the $\lambda_H$--path that see $p_i$ from
the left and right side. Denoting by $A=|\overline{q_l,p_i}|$ and  $B=|\overline{q_r,p_i}|$ the
distances to the apex we have $A+B+\lambda_H \geq \left|R_H \right| > 5\lambda_H $.
Consequently $A \geq 2 \lambda_H$ or  $B \geq 2 \lambda_H$ and the claim follows from the
sine rule in the triangle $\Delta(p_i,q_l,q_r)$.
\qed
\end{proof}

\begin{lemma}\label{5safe}
Assume that $H$ is a $6$--critical hole after the first classification and $f_H$ the
fence of $H$. Let $x$ denote twice the shortest path length from $s$ to $f_H$.
If $|{\mathcal T}_{\mathrm{opt}}| \geq x$, hole $H$ becomes $5$--safe.
\end{lemma}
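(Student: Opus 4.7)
The plan is to reduce the target $|R_H| \le 5|{\mathcal T}_{\mathrm{opt}}|$ to the purely geometric inequality $|R_H| \le 10\,d$: once this is established, combining with the hypothesis $|{\mathcal T}_{\mathrm{opt}}| \ge x = 2d$ immediately yields $5$-safety.

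To obtain the geometric inequality, I would first strengthen Lemma~\ref{alpha} to the $6$-critical regime. Running the same sine-rule computation with $c=6$ in place of $c=5$ gives $A+B > 5\lambda_H$, whence $\max(A,B)\ge \tfrac{5}{2}\lambda_H$; the sine rule in $\Delta(p_{i_0},q_l,q_r)$ then yields $\sin\alpha \le |q_l q_r|/\max(A,B) \le 2/5$, placing $\alpha$ strictly below $\arcsin(2/5)$ and therefore well below the $\pi/6$ bound of Lemma~\ref{alpha}. Next I would exploit the relative convexity of $R_H$ together with the construction of $f_H$: because $f_H$ is perpendicular to the angular bisector of $\alpha$ at the apex $p_{i_0}$, it is a support line of the relatively convex region enclosed by $R_H$. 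Consequently the entire curve $R_H$ lies on the $s$-side of $f_H$ and meets $f_H$ only at the single point $p_{i_0}$; moreover the two incident segments $a_{i_0}$ and $a_{i_0+1}$ leave $p_{i_0}$ at angles $\pi/2 - \alpha/2$ with $f_H$, i.e.\ almost perpendicularly.

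The core step is then to bound each of the two sub-arcs $L$ (cw) and $R$ (ccw) of $R_H$ joining $s$ to $p_{i_0}$ by $5d$, so that $|R_H| = L + R \le 10\,d$. Both arcs begin at $s$ (at perpendicular distance $\le d$ from $f_H$) and end at $p_{i_0}\in f_H$ with tangent direction nearly orthogonal to $f_H$; by relative convexity of $R_H$, the projection of each arc onto the bisector of $\alpha$ is monotone. Parametrizing each arc by its perpendicular distance to $f_H$ and using the sharpened angle bound from step one to control the possible lateral spread yields $L, R \le 5d$, and then $|R_H| \le 10 d \le 5|{\mathcal T}_{\mathrm{opt}}|$ follows from the hypothesis.

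The main technical obstacle is this last arc-length estimate. Relative convexity of $R_H$ alone is insufficient, because an asymmetric rubber band can in principle have one arc much longer than $d$; the proof must delicately combine the support-line property at $p_{i_0}$, the near-perpendicular tangent direction guaranteed by the strengthened angle bound, and the fact that both arcs stay on the $s$-side of $f_H$, in order to cap the total length of $R_H$ by $10d$.
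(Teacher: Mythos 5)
Your reduction target is the right one---establishing $|R_H|\le 10\,d$, where $d=x/2$ is the geodesic distance from $s$ to $f_H$, would indeed finish the proof, and that inequality is in fact true under $6$--criticality. But the route you propose has a genuine gap at precisely the step you yourself flag as the ``main technical obstacle,'' and it cannot be closed with the ingredients you allow yourself. First, the monotonicity claim is false: for a closed relatively convex curve, the two arcs whose projections onto the bisector direction are monotone are the arcs between the two points of $R_H$ that are \emph{extremal} in that direction; $p_{i_0}$ is one of them, but $s$ need not be the other, so the arc from $s$ to $p_{i_0}$ may move away from $f_H$ and return. Second, and more fundamentally, the apex angle bound constrains only the two segments $a_{i_0},a_{i_0+1}$ incident to $p_{i_0}$; it says nothing about the inclination of the remaining segments of $R_H$ relative to $f_H$, so a parametrization by distance to the fence gives no control over lateral arc length. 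Concretely: let $H$ be a long thin needle of length $\ell$ whose apex is the vertex carrying the fence, with $s$ at distance $\varepsilon\ll\ell$ from that apex. Every geometric hypothesis your arc-length step invokes (small apex angle, support line at the apex, $s$ within perpendicular distance $d\approx\varepsilon$ of $f_H$, relative convexity) is satisfied, yet $|R_H|\approx 2\ell$ and $|R_H|\le 10d$ fails badly. What excludes such configurations is not the geometry of $R_H$ alone but the quantitative content of $6$--criticality---the relationship between $|R_H|$, $\lambda_H$ and the endpoints $q_l,q_r$ of the $\lambda_H$--path---which your argument discards after extracting the angle bound.

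The paper's proof uses exactly that content, and is much shorter. From $A+B+\lambda_H\ge |R_H|>6\lambda_H$ and the triangle inequality it deduces $A\ge\tfrac{1}{3}|R_H|$ for the endpoint $q_l$ of the $\lambda_H$--path; the angle bound of Lemma~\ref{alpha} then forces $q_l$ to lie at distance at least $A\cos\tfrac{\pi}{6}\ge 0.86A$ from the fence; and since $s$ is within $\lambda_H\le\tfrac{1}{6}|R_H|$ of $q_l$, the shortest path from $s$ to $f_H$ has length at least $0.86A-\lambda_H\ge 0.12|R_H|$, whence $x\ge 0.24|R_H|\ge\tfrac{1}{5}|R_H|$ and $|R_H|\le 5x\le 5|{\mathcal T}_{\mathrm{opt}}|$. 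In short, the lower bound on $d$ comes from exhibiting a specific witness point ($q_l$) that criticality forces to be far from the fence while remaining close to $s$, not from an arc-length estimate on $R_H$. To salvage your write-up you would need to import that use of $A$, $B$ and $\lambda_H$; the convexity-plus-apex-angle route cannot succeed on its own.
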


\begin{proof}
Note that $6$-criticality implies $5$--criticality. We will use the notations from the proof of Lemma \ref{alpha}.
It is sufficient to show that $x \geq \frac{1}{5}\left|R_H\right| $ or that the length
of the shortest path from $s$ to $f_H$  is at least $0.1  \left|R_H\right|$.
Combining   $A+B+\lambda_H \geq \left|R_H\right|>6 \lambda_H $ with the triangle inequality $A+ \lambda_H \geq B$
we obtain $2A +2 \lambda_H \geq \left|R_H \right|$ and
$A \geq (\frac{1}{2}-\frac{1}{6}) \left|R_H\right|= \frac{1}{3} \left|R_H\right|$.
From Lemma \ref{alpha} we know that the shortest path length from $q_l$ to
$f_H$ is at least $A \cos{\frac{\pi}{6}}\geq 0.86 A$. Since the distance from $s$ to $q_l$
in ${\mathcal P}$ is at most $\lambda_H \leq \frac{1}{6} \left|R_H\right|$ we end with
\begin{equation*}
\frac{x}{2} \geq 0.86A-\lambda_H \geq \left(\frac{0.86}{3}-\frac{1}{6}\right) \left|R_H\right| =0.12 \left|R_H\right| \enspace .
\end{equation*}
\qed
\end{proof}

Now we can prove Lemma \ref{P0} in the following more general form:
\begin{lemma}\label{frontyard}
Let $\mathcal P$ be a polygon with at most $h$ holes,
$H$  a $6$--critical hole after the first classification and $x$ be twice the shortest path
length from $s$ to the fence $f_H$ in $\mathcal P$. Starting $(h-1)$--CPEX in the front yard
$\mathcal F_H$ we have: $\mathcal F_H$ will be explored with path length $\leq \mathcal C_{h-1} \cdot x$
or $H$ is $5$--safe.
\end{lemma}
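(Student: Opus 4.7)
The plan is to invoke $(h-1)$--CPEX on $\mathcal{F}_H$ starting at $s$ and to monitor the cumulative path length. If the subroutine finishes before having used more than $\mathcal{C}_{h-1}\cdot x$ length units, the first alternative of the lemma holds and we are done. Otherwise the path exceeds $\mathcal{C}_{h-1}\cdot x$ at some moment, and I would show that this already forces $H$ to be $5$--safe. Since $H$ is $6$--critical by assumption, Lemma~\ref{5safe} reduces this task to proving $|\mathcal{T}_{\mathrm{opt}}|\geq x$ for the (unknown) optimal watchman tour $\mathcal{T}_{\mathrm{opt}}$ of $\mathcal{P}$.

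I would then distinguish two cases according to whether $\mathcal{T}_{\mathrm{opt}}$ meets the fence $f_H$. If it does, then as a closed tour from $s$ to $f_H$ and back, $|\mathcal{T}_{\mathrm{opt}}|\geq 2\cdot d_{\mathcal{P}}(s,f_H)=x$, so $H$ is $5$--safe directly by Lemma~\ref{5safe}. If $\mathcal{T}_{\mathrm{opt}}$ avoids $f_H$, then it lies entirely in the component $\mathcal{F}_H$ of $\mathcal{P}\setminus f_H$ containing $s$. I claim that $\mathcal{T}_{\mathrm{opt}}$ is then also a valid watchman tour of the sub-polygon $\mathcal{F}_H$. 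Granted this claim, $|\mathcal{T}_{\mathrm{opt}}^{\mathcal{F}_H}|\leq|\mathcal{T}_{\mathrm{opt}}|$, and the inductively assumed competitiveness of $(h-1)$--CPEX on $\mathcal{F}_H$ (which has at most $h-1$ holes, since $f_H$ absorbs $H$ into the boundary of $\mathcal{B}_H$) guarantees that a complete run would have length at most $\mathcal{C}_{h-1}\cdot|\mathcal{T}_{\mathrm{opt}}^{\mathcal{F}_H}|$. Since the partial path has already surpassed $\mathcal{C}_{h-1}\cdot x$, we obtain $|\mathcal{T}_{\mathrm{opt}}^{\mathcal{F}_H}|>x$, hence $|\mathcal{T}_{\mathrm{opt}}|>x$, and Lemma~\ref{5safe} again yields the $5$--safety of $H$.

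The main obstacle is the watchman-tour claim in the second case: I need to show that no line of sight of $\mathcal{T}_{\mathrm{opt}}$ to a point of $\mathcal{F}_H$ is blocked by the newly introduced barrier $f_H$. The argument rests on the fact that $f_H$ is a straight chord and any line of sight $\overline{pq}$ is a straight segment, so the two can meet in at most one point. If $\overline{pq}$ crossed $f_H$, then $p$ and $q$ would sit in different components of $\mathcal{P}\setminus f_H$, contradicting $p,q\in\mathcal{F}_H$. Consequently every $\mathcal{P}$--visible pair of points lying in $\mathcal{F}_H$ is also $\mathcal{F}_H$--visible, turning $\mathcal{T}_{\mathrm{opt}}$ into a watchman tour for $\mathcal{F}_H$ and completing the second case.
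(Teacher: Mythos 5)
Your proposal is correct and follows essentially the same route as the paper's own proof: terminate $(h-1)$--CPEX either on completion or when the path reaches $\mathcal C_{h-1}\cdot x$, and in the latter case reduce to $|\mathcal T_{\mathrm{opt}}|\geq x$ via Lemma~\ref{5safe}, splitting on whether $\mathcal T_{\mathrm{opt}}$ reaches the fence or stays inside $\mathcal F_H$ and using the competitiveness of $(h-1)$--CPEX in the front yard. Your explicit argument that a tour avoiding the fence remains a valid watchman tour of $\mathcal F_H$ (a straight line of sight between two points of $\mathcal F_H$ cannot cross the straight chord $f_H$) is a detail the paper leaves implicit, and it is a welcome addition.
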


\begin{proof}
The front yard exploration by $(h-1)$--CPEX will stop if either \\
$(1)$ the exploration  path length $l$ reaches ${\mathcal C}_{h-1}x$ and ${\mathcal F}_H$ is still unexplored
or if \\
$(2)$  ${\mathcal F}_H$ gets explored with path length $l \leq {\mathcal C}_{h-1}x$.

We will prove, that this
procedure is ${\mathcal C}_{h-1}$--competitve in both cases, i.e.
$l \leq {\mathcal C}_{h-1}\left|{\mathcal T}_{\mathrm{opt}}\right|$, and that in case $(1)$ the hole
$H$ becomes $5$--safe. We remind that ${\mathcal T}_{\mathrm{opt}}$ is the optimal exploration tour for the whole
polygon whereas ${\mathcal T}_{\mathrm{opt}}(\mathcal F)$ denotes an optimal tour for the front yard.
Now we combine our two cases with another case distiction: \\
$(a)$ $|{\mathcal T}_{\mathrm{opt}}| \leq x$
and $(b)$  $|{\mathcal T}_{\mathrm{opt}}| > x$.

Remark that in case (a) ${\mathcal T}_{\mathrm{opt}}$ does not
leave $\mathcal F$ and, consequently, $\left|{\mathcal T}_{\mathrm{opt}}(\mathcal F) \right| \leq \left|{\mathcal T}_{\mathrm{opt}}\right|$.
In case (b) the hole becomes $5$-safe by Lemma \ref{5safe}.

\begin{description}
\item[(1.a)] Since the exploration is not finished and   $(h-1)$--CPEX is ${\mathcal C}_{h-1}$--competitive we have
$l = {\mathcal C}_{h-1}x <  {\mathcal C}_{h-1} \left|{\mathcal T}_{\mathrm{opt}}(\mathcal F) \right| \stackrel{(a)}{\leq} {\mathcal C}_{h-1}
\left|{\mathcal T}_{\mathrm{opt}} \right|$ and  $x <\left|{\mathcal T}_{\mathrm{opt}} \right|$ by canceling  ${\mathcal C}_{h-1}$.
Hence, $H$  becomes $5$-safe by Lemma \ref{5safe}.

\item[(1.b)]  $l = {\mathcal C}_{h-1}x \stackrel{(b)}{<}  {\mathcal C}_{h-1} \left|{\mathcal T}_{\mathrm{opt}} \right|$ and $H$ is $5$--safe.

\item[(2.a)]  $l \stackrel{(2)}{\leq} {\mathcal C}_{h-1}\left|{\mathcal T}_{\mathrm{opt}}(\mathcal F) \right|
\stackrel{(a)}{\leq} {\mathcal C}_{h-1}\left|{\mathcal T}_{\mathrm{opt}} \right|$.

\item[(2.b)]  $l \stackrel{(2)}{\leq} {\mathcal C}_{h-1}x  \stackrel{(b)}{\leq} {\mathcal C}_{h-1}\left|{\mathcal T}_{\mathrm{opt}} \right|$.
\end{description}
\qed
\end{proof} 

\subsection{Backyard Exploration}
\label{app_background}

Here we describe the construction, application, and analysis of the angle hull for exploring left vertices in $\mathcal B$ on the left side of the hole.

\begin{definition}
Let $\mathcal D$ be a simple polygon contained in another simple polygon $\mathcal P$. The \textit{angle hull} $\mathcal{AH}(\mathcal D)$ of $\mathcal D$ consists of all points in $\mathcal P$ that can see two points of $\mathcal D$ at an angle of $90^\circ$ \cite{HIKK}.
\end{definition}

Let $p_l$ be the leftmost point of $R$ related to $f$ and $p'_l$ its projection on $f$ (Fig.\ref{angle_hull}). Each shortest path form $s$ to a cut (belonging to a remaining left reflex vertex in $\mathcal B$ and crossing $f$ on the left) touches the cut in the area between the hole and the straight line through $p_l$ and $p'_l$. 
We try to explore all of these remaining vertices by following the shortest path from $s$ to $p_l$ and approaching $f$ on the angle hull $\mathcal{AH}$ of the shortest path from $p_l$ to $p_{i_0}$ afterwards. Notice, that this path $\Pi$ does not depend on any vertex in $\mathcal B$.

\begin{figure}[ht]
\centering
\includegraphics[scale=.8]{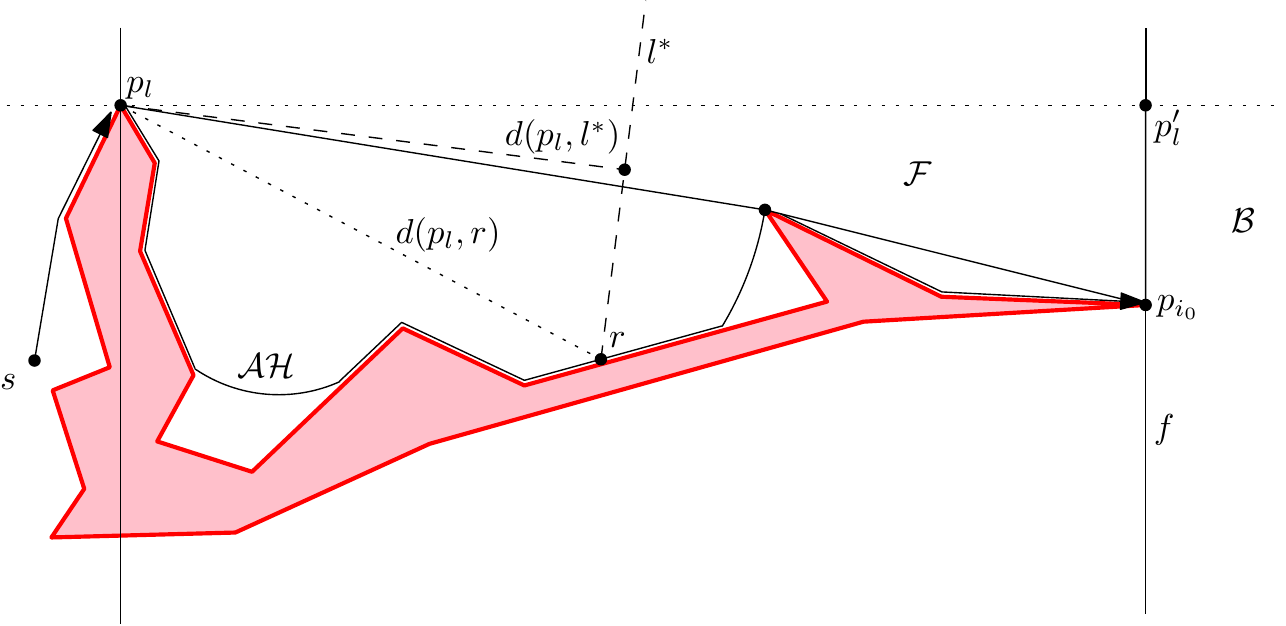}
\caption{The angle hull.}
\label{angle_hull}
\end{figure}

\begin{lemma}
\label{backyard_angle_hull}
If a left reflex vertex $l$ becomes explored by following the path $\Pi$, then the length of the traveled path is bounded by four times the length of the shortest path from $s$ to the corresponding cut $l^*$.
\end{lemma}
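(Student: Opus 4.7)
\medskip
\noindent\textbf{Proof plan for Lemma \ref{backyard_angle_hull}.}

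The plan is to split the traveled path into two natural pieces and charge each piece to the shortest path $\sigma$ from $s$ to the cut $l^*$. Let $\Pi_l$ denote the prefix of $\Pi$ traversed before $l$ becomes visible; write $\Pi_l=\pi_1\cdot\pi_2$, where $\pi_1$ is the shortest path from $s$ to $p_l$ taken in the first stage, and $\pi_2$ is the portion of the walk along the angle hull $\mathcal{AH}(D)$ of the geodesic $D$ from $p_l$ to $p_{i_0}$ that is executed until $l$ is seen.

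First I would argue that $|\pi_1|\le|\sigma|$. By the choice of $p_l$ as the leftmost point of $R$ relative to $f$, and because $l^*$ crosses $f$ on the left, any shortest path from $s$ to $l^*$ must first clear the obstacle $R$ on its left side; in particular it reaches a point that dominates $p_l$ in shortest-path distance. Combined with the fact that $\pi_1$ is itself a geodesic from $s$ to $p_l$ in $\mathcal P$, this yields $|\pi_1|\le|\sigma|$.

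Next I would bound $|\pi_2|$. The shortest path $\sigma$, once it passes $p_l$, still has to reach a visible point on $l^*$; since $l^*$ is a cut emanating from a vertex $l$ hidden behind $D$ (in the region reachable only from the left), $\sigma$ must cross the geodesic $D$ at some point $d^*$ between $p_l$ and the tangent point at which $l$ becomes visible. Thus the length of the sub-chain $D[p_l,d^*]$ is at most $|\sigma|-|\pi_1|$. The angle-hull property proved in \cite{HIKK}\,---\,that the length of an arc of $\mathcal{AH}(D)$ is at most $\frac{\pi}{2}$ times the length of the corresponding sub-chain of $D$, and that from a point on $\mathcal{AH}(D)$ one can see around the reflex vertex behind $D$\,---\,then gives $|\pi_2|\le \frac{\pi}{2}\bigl(|\sigma|-|\pi_1|\bigr)\le\frac{\pi}{2}|\sigma|$. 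Adding the two contributions yields $|\Pi_l|\le\bigl(1+\tfrac{\pi}{2}\bigr)|\sigma|<4|\sigma|$, as required.

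The main obstacle, I expect, is the geometric bookkeeping needed to guarantee that $\sigma$ really must visit a point past $p_l$ and then cross the geodesic $D$ before reaching $l^*$; this uses the specific location of the cuts handled in this case (left cuts crossing $f$ on the left of $p_{i_0}$) and the fact that $R$ and $D$ separate $s$ from the region containing $l$. Once these separation properties are established, the angle-hull estimate from \cite{HIKK} applies routinely and the factor $4$ absorbs the constant $1+\pi/2\approx 2.57$ with room to spare.
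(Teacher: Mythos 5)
Your decomposition of the walk into the geodesic $\pi_1$ from $s$ to $p_l$ plus the angle-hull portion $\pi_2$, and the charge of $\pi_1$ against $\sigma$, match the paper's argument. The divergence, and the gap, is in how $\pi_2$ is controlled. You bound $\pi_2$ by the sub-chain $D[p_l,d^*]$, where $d^*$ is the point at which $\sigma$ crosses the geodesic $D$. But the walk along $\mathcal{AH}(D)$ does not stop when the underlying chain position passes $d^*$: it stops when the angle hull actually reaches the cut $l^*$, i.e.\ at the intersection point $r=\mathcal{AH}(D)\cap l^*$, and nothing in your argument shows that $r$ lies over the sub-chain $D[p_l,d^*]$ (or over any sub-chain of length $O(|\sigma|-|\pi_1|)$). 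The angle hull can meet the cut well beyond the point where the shortest path does, so the inequality $|\pi_2|\le\frac{\pi}{2}(|\sigma|-|\pi_1|)$ is unsubstantiated as stated. The paper closes exactly this hole differently: it takes the true intersection point $r$, bounds the angle-hull length from $p_l$ to $r$ by twice the geodesic distance $d(p_l,r)$ using the HIKK angle-hull theorem, and then shows by an elementary trigonometric argument (exploiting the orientation of the cut relative to the chain) that $d(p_l,r)\le 2\,d(p_l,l^*)$, which yields the factor $2\cdot 2=4$ after adding the approach path. Some substitute for that trigonometric step is needed in your version.

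A secondary issue: the HIKK bound for the angle hull of a convex chain inside a polygon is a factor of $2$ (and this is tight), not $\frac{\pi}{2}$; the $\frac{\pi}{2}$ constant applies only to the semicircle over a single segment. With the correct constant your arithmetic would still come in under $4$, so this alone is not fatal, but combined with the missing control of where $\mathcal{AH}(D)$ first meets $l^*$ the proposed bound on $|\pi_2|$ does not go through.
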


\begin{proof}
The angle hull $\mathcal{AH}$ intersects $l^*$ in point $r$. Its length from $p_l$ to $r$
is bounded by twice the length $d(p_l,r)$ of the shortest path connecting those both points \cite{HIKK}.
Using elementary trigonometric reasoning one can show that the length of this path is again bounded by twice the distance $d(p_l,l^*)$ of $p_l$ to the cut $l^*$. Together with the optimal path from $s$ to $p_l$ the claim follows. 
\qed
\end{proof}

Notice, that the optimal tour does not has to explore a vertex from the left side of the hole. But in this case the strategy for the hole's right side is competitive and the path length traveled on the left side is bounded by the doubling approach. 

Furthermore the strategy does not depend on the location of the vertices in $\mathcal B$. If the fence $f$ is touched on both sides we can be sure that there has to be a cut not crossing $f$ and the hole becomes safe. Otherwise the polygon is explored before and Lemma \ref{backyard_angle_hull} grants competitiveness of the exploration.

\subsection{h-CPEX Pseudocode}
\label{pseudocode}

\begin{algorithm}
	\caption{$h$-CPEX}
	\label{epch}
	\begin{algorithmic}[1]
	\Procedure{CPEX}{$\mathcal{P}$, $s$, $\mathcal{H}$, $\lambda$, $h$}
		\While{$\mathcal{H}$ is empty}
				\State apply HIKK for simple polygons until first hole is found
		\EndWhile
		\ForAll{$H \in \mathcal{H}$ marked \textit{discovered}}
				\State learn shortest path $R_H$, compute $\lambda_H$, and add new holes to $\mathcal{H}$
				\State $\lambda=\max\left(\lambda, ~ \lambda_H\right)$
				\If {$\left|R_H\right| \leq c \cdot \lambda$}
						\Comment{Is $H$ $c$--safe?}
						\State create $\mathcal{P}'$ by inserting barrier $b$ and call CPEX($\mathcal{P}'$, $s$, $\mathcal{H}$, $\lambda$, $h-1$) 
							
				\Else
						\State mark $H$ as critical
				\EndIf
				\ForAll {$H \in \mathcal{H}$ marked \textit{critical}}
						\If {$\left|R_H\right| \leq c \cdot \lambda$}
							\Comment{Update status}
							\State create $\mathcal{P}'$ by inserting barrier $b$ and call CPEX($\mathcal{P}'$, $s$, $\mathcal{H}$, $\lambda$, $h-1$)
								
						\EndIf
				\EndFor
		\EndFor
		\While{no new hole becomes visible}
				\ForAll{$H \in \mathcal{H}$ marked \textit{critical}}
						\State create $\mathcal{F}_{H}$ by inserting fence line $f_H$ and update list $\mathcal{H}$
						\While{tour length $\leq x \cdot \mathcal C_{h-1}$}
								\State CPEX($\mathcal{F}_{H}$, $s$, $\mathcal{H}$, $\lambda$, $h-1$)
						\EndWhile
						\If{tour length $\leq  x \cdot \mathcal C_{h-1}$}
								\Comment{Is $H$ $c$--safe?}
								\State create $\mathcal{P}'$ by inserting barrier $b$ and call CPEX($\mathcal{P}'$, $s$, $\mathcal{H}$, $\lambda$, $h-1$)
						\EndIf
				\EndFor
				\Comment{All front yards are explored}
				\State $\mathcal{B}= \mathcal{P} \setminus \bigcup \mathcal{F}_{H}$ and explore it with star search 
		\EndWhile
		\State add new hole to $\mathcal{H}$ and call CPEX($\mathcal{P}$, $s$, $\mathcal{H}$, $\lambda$, $h$)
		\Comment{Restart to classify new hole}
	\EndProcedure
	\end{algorithmic}
\end{algorithm}

\end{appendix}

\end{document}